\newcommand{\R}{\mathbb R}
\newcommand{\quant}[1]{\widetilde #1^N}
\newcommand{\tildeq}{\widetilde q^N_\alpha(x)}
\newcommand{\hatq}{\widehat q^{N,n}_{\alpha}(x)}
\newcommand{\barq}{\bar {q}^{N,n}_{\alpha,B}(x)}
\newcommand{\suppX}{S_X}
\newtheorem{theorem}{Theorem}[section]
\newtheorem{lemma}[theorem]{Lemma}
\newtheorem{corollary}[theorem]{Corollary}
\theoremstyle{definition}
\newcommand\ackname{Acknowledgements}
  \newenvironment{acknowledgements}{%
      \titlepage
      \null\vfil
      \@beginparpenalty\@lowpenalty
      \begin{center}%
        \bfseries \ackname
        \@endparpenalty\@M
      \end{center}}%
     {\par\vfil\null\endtitlepage}
\begin{document}
\title{\Large
	{\textbf{\textsc{Conditional Quantile Estimation \\
	through Optimal Quantization}}}}
\author{\sc{Isabelle Charlier$^{(1,2,3)}\footnote{Research is supported by a Bourse F.R.I.A. of the Fonds National de la Recherche Scientifique, Communaut\'e fran\c{c}aise de Belgique.}$}\and\sc{ Davy Paindaveine$^{(1,2)}$}\footnote{Research is supported by an \mbox{A.R.C.} contract from the Communaut\'e Fran\c{c}aise de Belgique and by the IAP research network grant P7/06 of the Belgian government (Belgian Science Policy).} \and \sc{J\'er\^ome Saracco$^{(3)}$}}
\maketitle
\begin{center}
{\it
$^{(1)}$ Universit\'e Libre de Bruxelles, D\'epartement de Math\'ematique, Boulevard du Triomphe, Campus Plaine, CP210, B-1050, Bruxelles, Belgique.

\textnormal{\href{mailto:ischarli@ulb.ac.be}{\nolinkurl{ischarli@ulb.ac.be}}}, \textnormal{\href{mailto:dpaindav@ulb.ac.be}{\nolinkurl{dpaindav@ulb.ac.be}}}
 
$^{(2)}$ ECARES, 50 Avenue F.D. Roosevelt, CP114/04, B-1050, Bruxelles, Belgique.

$^{(3)}$ Universit\'e de Bordeaux, Institut de Math\'ematiques de Bordeaux, UMR CNRS 5251 et INRIA Bordeaux Sud-Ouest, \'equipe CQFD, 351 Cours de la Lib\'eration, 33405 Talence.

\textnormal{\href{mailto:Jerome.Saracco@math.u-bordeaux1.fr}{\nolinkurl{Jerome.Saracco@math.u-bordeaux1.fr}}}
}
\end{center}

\begin{abstract}
In this paper, we use quantization to construct a nonparametric estimator of conditional quantiles of a scalar response~$Y$ given a $d$-dimensional vector of covariates~$X$. First we focus on the population level and show how optimal quantization of~$X$, which consists in discretizing~$X$ by projecting it on an appropriate grid of~$N$ points, allows to approximate conditional quantiles of~$Y$ given~$X$. We show that this is approximation is arbitrarily good as~$N$ goes to infinity and provide a rate of convergence for the approximation error. Then we turn to the sample case and define an estimator of conditional quantiles based on quantization ideas. We prove that this estimator is consistent for its fixed-$N$ population counterpart. The results are illustrated on a numerical example. Dominance of our estimators over local constant/linear ones and nearest neighbor ones is demonstrated through extensive simulations in the companion paper \citet{Chaetal2014b}.
\end{abstract}


\section{Introduction}

In numerous applications, one considers regression modelling to assess the impact of a $d$-dimensional vector of covariates~$X$ on a scalar response variable~$Y$. It is then classical to consider the conditional mean and variance functions
\begin{equation}
\label{condmeanvar}
x\mapsto {\rm E}[Y|X=x]
\quad 
\textrm{ and }
\quad 
x\mapsto {\rm Var}[Y|X=x],
\end{equation}
respectively. A much more thorough picture, however, is obtained by considering, for various $\alpha\in(0,1)$, the conditional quantile functions 
\begin{equation}
\label{condquant}
x\mapsto q_\alpha(x)= \inf\big\{y\in\R : F(y|x) \ge \alpha\big\}
,
\end{equation}
where $F(\,\cdot\,|x)$ denotes the conditional distribution of $Y$ given~$X=x$. These conditional quantile functions completely characterize the conditional distribution of~$Y$ given~$X$, whereas~(\ref{condmeanvar}), in contrast, only measures the impact of~$X$ on~$Y$'s location and scale, hence may completely miss to capture a possible impact of~$X$ on the shape of $Y$'s distribution, for instance. 

An important application of conditional quantiles is that they provide reference curves or surfaces (the graphs of~$x\mapsto q_\alpha(x)$ for various $\alpha$) and conditional prediction intervals (intervals of the form~$I_\alpha(x)=[q_{\alpha}(x),q_{1-\alpha}(x)]$, for fixed~$x$) that are widely used in many different areas. In medicine,  reference growth curves for children's height and weight as a function of age are considered. Reference curves are also of high interest in economics (e.g., to study discrimination effects and trends in income inequality),  in ecology (to observe how some covariates can affect limiting sustainable population size), and in lifetime analysis (to assess influence of risk factors on survival curves), among many others.

Quantile regression, that concerns the estimation of conditional quantile curves, was introduced in the seminal paper \citet{KoenkerBassett}, where the focus was on linear regression. Since then, there has been much research on quantile regression, in particular in the nonparametric regression framework. Kernel and nearest-neighbor estimators of conditional quantiles were investigated in \citet{BhattachGango}, while \citet{YuJones2} focused on local linear quantile regression and double-kernel approaches. Many other estimators were also considered; see, among others, \citet{Fan_et_al}, \citet{Gannoun_Saracco}, \citet{HeagertyPepe}, or \citet{Yu_al}.
In this work, we introduce a new nonparametric regression quantile method, based on \emph{optimal quantization}.

Optimal quantization is a tool that was first used by engineers in signal and information theory, where ``quantization'' refers to the discretization of a continuous signal using a finite number of points, called \emph{quantizers}. The aim being to achieve an efficient, parsimonious, transmission of the signal, the number and location of the quantizers have to be optimized. Quantization was later used in cluster analysis, pattern and speech recognition. More recently, it was considered in probability theory;
see, e.g., \citet{Zador} or \citet{Pages98}. In this context, the problem of optimal quantization consists in finding the best approximation of a continuous $d$-dimensional probability distribution~$P$ by a discrete probability distribution charging a fixed number~$N$ of points. In other words, the $d$-dimensional random vector~$X$ needs to be approximated by a random vector $\quant{X}$ that may assume at most $N$ values. 
Quantization was extensively investigated in numerical probability, finance, stochastic processes, and numerical integration~; see, e.g., \citet{Pages_al2004a}, \citet{Pages_al2004b}, or \citet{Bally_al}. 

Quantization, however, was seldom used in statistics. To the best of our knowledge, its applications in statistics are restricted to Sliced Inverse Regression (\citealp{Azais_al}) and clustering (\citealp{Fis2010} and \citealp{Fis2013}). As announced above, we use in this paper quantization in a nonparametric quantile regression framework. In this context, indeed, quantization naturally takes care of the localization-in-$x$ required in any nonparametric regression method. The resulting quantization-based estimators inherently are based on adaptive bandwidths, hence may dominate the local constant and local linear estimators~\citet{YuJones2} that typically involve a unique global bandwidth. Quantization-based estimators also provide a refinement over nearest-neighbor estimators (such as those from \citet{BhattachGango}) since, unlike the latter, the number of ``neighbors'' the former consider depends on the point~$x$ at which $q_\alpha(x)$ is to be estimated. This dominance over these two standard competitors, in terms of MSEs, is demonstrated through extensive simulations in the companion paper \citet{Chaetal2014b}. 

The outline of the paper, that mostly focuses on theoretical aspects, is as follows. Section~\ref{subsect_quantif} discusses quantization and provides some results on quantization, both of a theoretical and algorithmic nature. Section~\ref{sect_approx} describes how to approximate conditional quantiles through optimal quantization, which is achieved by replacing~$X$ in the definition of conditional quantiles by its ${\text L}_p$-optimal quantized version $\quant{X}$ (for some fixed~$N$). The convergence rate of this approximation to the true conditional quantiles is obtained. Section~\ref{sect_estim} defines the corresponding estimator and proves its consistency (for the fixed-$N$ approximated conditional quantiles). The results are illustrated on a numerical example, in which a smooth variant of the proposed estimator based on the bootstrap is also introduced. Section~\ref{sect_concl} provides some final comments. Eventually, the Appendix collects technical proofs.




%


\section{Optimal quantization}
\label{subsect_quantif}

In this section, we define the concept of ${\text L}_p$-norm optimal quantization and state the main results that will be used in the sequel (Section~\ref{secquant1}). Then we describe a stochastic algorithm that allows to perform optimal quantization (Section~\ref{subsubsect_algo}), and provide some convergence results for this algorithm (Section~\ref{subsubCLVQ}).

\subsection{Definition and main results}
\label{secquant1}

Let $X$ be a random $d$-vector defined on a probability space $(\Omega, \mathcal F, P)$, with distribution~$P_X$, and fix a real number~$p\geq 1$ such that~${\rm E}[|X|^p]<\infty$ (throughout, $|\cdot|$ denotes the Euclidean norm). Quantization replaces~$X$ with an appropriate random $d$-vector $\pi(X)$ that assumes at most~$N$ values. In optimal ${\text L}_p$-norm quantization, the vector~$\pi(X)$ minimizes the ${\text L}_p$-norm quantization error 
$$
\| \pi(X) - X \|_p
,
\quad
\textrm{ with }
\| Z \|_p
:=
\big({\rm E}\big[|Z|^p\big]\big)^{1/p}
.
$$ 
This optimization problem is equivalent to finding an $N$-grid of $\R^d$ ---  $\gamma^N$, say --- such that the projection $\widetilde X^{\gamma^N}=\text{Proj}_{\gamma^N}(X)$ of $X$ on the (Euclidean-)nearest point of the grid minimizes the \emph{quantization error} $\|\widetilde X^{\gamma^N}-X\|_p$. This definition leads to two natural questions: does such a minimum always exist? How does this minimum behave as~$N$ goes to infinity? 

Existence (but not unicity) of an optimal $N$-grid --- that is, a grid minimizing this quantization error --- has been obtained under the assumption that $P_X$ does not charge any hyperplane; see~\citet{Pages98}. Irrespective of the sequence of optimal grids considered, $\quant{X}$ converges to~$X$ in~$\text L_p$. This is a direct corollary of the following result, which is often referred to as Zador's theorem (\citealp{Zador}) and provides the rate of convergence of the quantization error; see, e.g., \citet{Graf_Lusch} for a proof.

 \begin{theorem}
   Assume that $\|X\|_{p+\delta}<\infty$ for some $\delta>0$. Let $P_X(du) = f(u)\lambda_d(du) + \nu(du)$ be the Lebesgue decomposition of $P_X$, where $\lambda_d$ is the Lebesgue measure on $\R^d$ and $\nu \bot \lambda_d$.   Then
   $$
   \lim_{N\to\infty} \left(N^{\frac{p}{d}}\min_{\gamma^N\in(\R^d)^N} \|\widetilde X^{\gamma^N}-X\|_p^p
\, \right) 
=
 J_{p,d}\left(\,\int_{\R^d} (f(u))^{\frac{d}{p+d}}\, du\right)^{1+\frac{p}{d}}
\!,
$$
with $J_{p,d}= \min_N \big( N^{p/d} \min_{\gamma^N\in(\R^d)^N} D_N^{p,U}(\gamma^N)\big)$, where $D^{p,U}_N(\gamma^N)$ denotes the $(p$th power of the) quantization error, obtained for the uniform distribution over~$[0,1]^d$, when considering the grid $\gamma^N\in(\R^d)^N$.
   \end{theorem}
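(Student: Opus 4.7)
The plan is to follow the classical two-bound strategy: establish the asymptotics for the uniform distribution on $[0,1]^d$ first (whose value \emph{defines} the constant $J_{p,d}$), then transfer it to absolutely continuous compactly supported densities via a partitioning-plus-optimal-allocation argument, and finally extend to general $P_X$ by approximation and tail control using the $L_{p+\delta}$ moment hypothesis.

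For the uniform distribution on an arbitrary cube $C$ of side $s$, an affine change of variables shows that the $L_p$ quantization error with $M$ points scales as $J_{p,d}\, s^p |C|\, M^{-p/d}$. For a density $f$ that is a step function on a finite partition $\{C_i\}$ of a compact set, with values $f_i$ on $C_i$, I would construct an $N$-grid by allocating $N_i$ points to $C_i$ and placing an asymptotically optimal uniform grid inside each $C_i$. Summing yields an asymptotic upper bound of $J_{p,d}\sum_i f_i |C_i|^{1+p/d} N_i^{-p/d}$. Optimizing over $(N_i)$ subject to $\sum_i N_i = N$, via Lagrange multipliers or equivalently H\"older's inequality, gives the allocation $N_i \propto N (f_i |C_i|)^{d/(p+d)}$ and the constant $J_{p,d}\bigl(\sum_i f_i^{d/(p+d)} |C_i|\bigr)^{1+p/d} = J_{p,d}\bigl(\int f^{d/(p+d)}\bigr)^{1+p/d}$. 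Approximating a general compactly supported density by step functions in the appropriate norm then yields the upper bound for all such densities.

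To pass from compactly supported a.c.\ distributions to general $P_X$, I would use the $L_{p+\delta}$ assumption: after rescaling by $N^{p/d}$, the contribution to the quantization error coming from the tail event $\{|X|>R\}$ can be controlled, uniformly in $N$, by a quantity involving $\|X\|_{p+\delta}$ and $P(|X|>R)^{\delta/(p+\delta)}$ via H\"older's inequality, which vanishes as $R\to\infty$; this is precisely where the extra $\delta$ moments are essential. The singular part $\nu$ contributes only a lower-order term, since the integral $\int f^{d/(p+d)}$ depends only on the absolutely continuous component of $P_X$, and a covering argument shows that an asymptotically vanishing fraction of grid points suffices to approximate $\nu$ at rate $o(N^{-p/d})$.

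The main obstacle is the lower bound: one must show that \emph{no} sequence of grids $(\gamma^N)$ can asymptotically beat $J_{p,d}\bigl(\int f^{d/(p+d)}\bigr)^{1+p/d}$. The standard route, followed by Graf and Luschgy, considers the empirical measures of optimal codebooks and establishes, via a Vorono\"\i/localization argument, that on each cube $C$ of a fine partition the local contribution to the quantization error is bounded below by the uniform rate applied to $P_X|_C$. Taking liminfs, invoking Fatou, and applying Jensen's inequality to the concave map $t\mapsto t^{d/(p+d)}$ after refining the partition then yields the matching lower bound. The delicate point is pushing the integral through the liminf while simultaneously controlling the interaction between the optimal grids and the partition; once this is done, matching upper and lower constants give the stated limit.
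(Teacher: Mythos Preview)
The paper does not supply its own proof of this theorem: it is stated as Zador's theorem and the reader is simply referred to Graf and Luschgy for a proof. Your sketch is precisely the classical two-bound argument developed in that reference (upper bound via step-function approximation with optimal allocation, lower bound via a partition/Vorono\"\i\ localization and Fatou, tail control via the extra $\delta$ moments), so in effect you have outlined the very proof the paper cites rather than departed from it.
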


In dimension~$d=1$, one has~$J_{p,d}=\frac{1}{2^p(p+1)}$. For~$d>1$, little is known about~$J_{p,d}$, but it can be shown that $J_{p,d} \sim \left(\frac{d}{2\pi e}\right)^{p/2}$ as $d\to\infty$; see \citet{Graf_Lusch}.

\begin{corollary}\label{rate_conv_qu}
	Assume that $\|X\|_{p+\delta}<\infty$ for some $\delta>0$. Then, for some~$C,D\in\R$ and~$N_0\in\mathbb{N}$, we have that
$$
\| \widetilde X^{\gamma^N} -X\|_p^p\le \frac{1}{N^{p/d}}
\big(C\|X\|^{p+\delta}_{p+\delta} + D\big),
$$
for all $N \ge N_0$.
\end{corollary}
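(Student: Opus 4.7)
The plan is to deduce the corollary directly from Zador's theorem by upper-bounding the asymptotic constant $J_{p,d}\bigl(\int f^{d/(p+d)}\,du\bigr)^{1+p/d}$ in terms of $\|X\|_{p+\delta}^{p+\delta}$, and then converting the limit statement into a non-asymptotic one valid for $N$ large enough. Throughout, $\gamma^N$ denotes a sequence of optimal $N$-grids, so that Theorem~1.1 applies to $\|\widetilde X^{\gamma^N}-X\|_p^p$.

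The main work is to control $\int_{\R^d} f(u)^{d/(p+d)}\,du$. The plan is to insert the weight $(1+|u|^{p+\delta})$ and apply Hölder's inequality with conjugate exponents $r=(p+d)/d$ and $s=(p+d)/p$. Writing
\[
\int f(u)^{d/(p+d)}\,du = \int \bigl[f(u)(1+|u|^{p+\delta})\bigr]^{d/(p+d)} (1+|u|^{p+\delta})^{-d/(p+d)}\,du,
\]
Hölder yields
\[
\int f^{d/(p+d)}\,du \le \left(\int f(u)(1+|u|^{p+\delta})\,du\right)^{\!d/(p+d)}\!\left(\int (1+|u|^{p+\delta})^{-d/p}\,du\right)^{\!p/(p+d)}.
\]
The first factor is bounded by $\bigl(1+\|X\|_{p+\delta}^{p+\delta}\bigr)^{d/(p+d)}$, since $\int f\,du\le 1$ and $\int f(u)|u|^{p+\delta}\,du\le \mathrm{E}[|X|^{p+\delta}]$ by the Lebesgue decomposition. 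The second factor is a finite constant $K^{p/(p+d)}$: indeed, $K:=\int(1+|u|^{p+\delta})^{-d/p}\,du<\infty$ because at infinity the integrand decays like $|u|^{-(p+\delta)d/p}$, and $(p+\delta)d/p>d$ precisely because $\delta>0$. Raising the resulting inequality to the power $1+p/d=(p+d)/d$ gives
\[
\left(\int f^{d/(p+d)}\,du\right)^{1+p/d} \le K^{p/d}\bigl(1+\|X\|_{p+\delta}^{p+\delta}\bigr).
\]

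Combining this with Theorem~1.1, we obtain
\[
\limsup_{N\to\infty} N^{p/d}\,\|\widetilde X^{\gamma^N}-X\|_p^p \le J_{p,d}K^{p/d}\bigl(\|X\|_{p+\delta}^{p+\delta}+1\bigr).
\]
Hence for any $\varepsilon>0$ there exists $N_0$ such that, for all $N\ge N_0$,
\[
\|\widetilde X^{\gamma^N}-X\|_p^p \le \frac{1}{N^{p/d}}\bigl(C\|X\|_{p+\delta}^{p+\delta}+D\bigr),
\]
with $C=D=(1+\varepsilon)J_{p,d}K^{p/d}$. This gives the claimed non-asymptotic bound.

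The only mildly delicate point is the Hölder step and the verification that $K<\infty$ uses exactly the strict inequality $\delta>0$; everything else is a direct, quantitative reading of Zador's theorem. In particular, the singular part $\nu$ of the Lebesgue decomposition plays no role in the asymptotic constant and only enters implicitly through the bound $\int f\,du\le 1$.
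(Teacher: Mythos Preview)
Your argument is correct: the H\"older step with exponents $(p+d)/d$ and $(p+d)/p$, together with the integrability of $(1+|u|^{p+\delta})^{-d/p}$ (which uses exactly $\delta>0$), cleanly bounds the Zador constant by $J_{p,d}K^{p/d}(1+\|X\|_{p+\delta}^{p+\delta})$, and converting the limit into an inequality for $N\ge N_0$ is immediate.

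The paper does not actually give a proof of this corollary; it is stated immediately after Zador's theorem as a direct consequence, with the reader implicitly referred to \citet{Graf_Lusch}. What you have written is precisely the classical derivation (sometimes called Pierce's lemma, or the non-asymptotic Zador bound) that one finds in that reference, so your proof is the expected one. One small remark: in your version the constants $C$ and $D$ are universal (they depend only on $p,d,\delta$), while $N_0$ still depends on the law of $X$ through the rate of convergence in Zador's theorem. This is entirely sufficient for the paper's purposes, since the corollary is only invoked for a fixed distribution~$P_X$ (in Lemma~\ref{rate_conv_E_X}(ii) and Theorem~\ref{rate_conv_q_X}(ii)); the stronger statement with a universal $N_0$ also holds but requires the full non-asymptotic version from \citet{Graf_Lusch} rather than just the limit form of Zador's theorem quoted here.
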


Summing up, there exist ${\text L}_p$-optimal $N$-grids --- or \emph{optimal $N$-quantizers} --- that minimize the quantization error. A further natural question then is: how to obtain an optimal $N$-grid? We now discuss a stochastic algorithm that addresses this problem.

\subsection{The stochastic gradient algorithm}\label{subsubsect_algo}

Except in some very exceptional cases (such as the uniform over a compact interval of the real line), optimal $N$-grids have no closed form. That is, there exist  no results that describe the geometric structure of such grids. However, one can attempt to obtain (approximations of) optimal $N$-grids through a \emph{stochastic gradient algorithm} such as the following. 

Let $(\xi_t)_{t\in\mathbb{N}_0}$, $\mathbb{N}_0 = \{1,2,\ldots\}$, be a sequence of independent and identically $P_X$-distributed random vectors, and let $(\delta_t)_{t\in\mathbb{N}_0}$ be a deterministic sequence in~$(0,1)$ such that
    $$
    \sum_{t=1}^\infty\delta_t = +\infty \quad \textrm{and}\quad \sum_{t=1}^\infty\delta_t^2 < +\infty
    $$
(throughout the paper, we tacitly assume that the algorithm makes use of a sequence~$\delta_t$ that satisfies these conditions).  The algorithm starts from a deterministic $N$-tuple $X^0=x^0$ with $N$ pairwise distinct entries. This initial $N$-grid~$x^0$ is then updated
as follows. For every $t\in\mathbb{N}_0$, define recursively the grid $X^t$ as
\begin{equation}
\label{grad_sto_form}
X^t=X^{t-1}-\frac{\delta_t}{p}
\,
\nabla_x d^p_N(X^{t-1},\xi^t),
\end{equation}
where $\nabla_x d_N^p(x,\xi)$ stands for the gradient with respect to the $x$-argument of the so-called local quantization error
$
d^p_N(x,\xi)= \min_{1\le i\le N}|x_i-\xi|^p, 
$
with $x=(x_1,\ldots,x_N)\in(\R^d)^N$ and $\xi\in \R^d$. Note that, for any~$\xi$, the $i$th entry of this gradient is given by
$$
\big(\nabla_x d_N^p(x,\xi)\big)_i 
= 
p\, |x_i-\xi|^{p-1} \frac{x_i-\xi}{|x_i-\xi|}\,\mathbb I_{[x_i=\text{Proj}_x(\xi)]},
$$
where $\mathbb I_A$ denotes the indicator function of the set $A$, and with the convention $0/0=1$ when $x_i=\xi$. This implies that the~$N$-vector $\nabla_x d_N^p(x,\xi)$ always has exactly one non-zero entry, namely the one corresponding to the point of the grid~$x$ that is closest to~$\xi$. Consequently, at each step~$t$ of the algorithm, only one point of the grid~$X^{t-1}$ will be changed to define the grid~$X^{t}$, namely the point from the grid~$X^{t-1}$ that is closest to~$\xi^{t}$. 

More details about this algorithm can be found in~\citet{Pages03}. For~$p=2$, this is known as the \emph{Competitive Learning Vector Quantization (CLVQ)} algorithm, and is the most commonly used one in quantization. This success is explained by the fact that the convergence results obtained for the CLVQ algorithm are much more satisfactory than for~$p\neq 2$.

\subsection{Convergence results for 
the CLVQ algorithm}
\label{subsubCLVQ}

Here we state several results showing that the grids provided by the CLVQ algorithm 
converge to optimal grids as the number of iterations~$t$ goes to infinity. 


We start with the univariate case ($d=1$). Assume that the support of~$P_X$ is compact and let its convex hull~$C$ be~$[a,b]$. Write $F_N^{+}:=\{x=(x_1,\dots,x_N) : a<x_1<\cdots<x_N<b\}$ for the set of $N$-grids on~$C$ involving  pairwise distinct points stored in ascending order, and let $\bar F_N^{+}$ be its closure; see~\citet{Pages98}. Denote by~$D_N^{2,P_X}(x)=\int_C\min_{1\le i\le N} |x_i-w|^2 P_X(dw)$ the (squared) ${\rm L}_2$-norm quantization error associated with a given grid~$x=(x_1,\dots,x_N)\in F_N^{+}$.

\begin{theorem}[\citealp{Pages98}, Th.~27]\label{conv_d=1}
In the univariate setup above, we have the following. 
\begin{enumerate}[(i)]
\item Assume that $P_X$ is absolutely continuous with a density~$f:[a,b]\to\R^+$ that is positive on~$(a,b)$, and assume either that $f$ is strictly log-concave or that it is log-concave with $f(a+)-f(b-)>0$. Then $x\mapsto D_N^{2,P_X}(x)$ has a unique minimizer~$x^*$ in~$\bar F_N^{+}$, which coincides with the unique solution of $\nabla D_N^{2,P_X}(x)=0$ in~$\bar F_N^{+}$
(when $P_X$ is the uniform over~$[0,1]$,
the optimal grid is~$x^*=\left(a+\frac{2k-1}{2N}(b-a)\right)_{1\le k\le N})$.
\item  
Irrespective of the initial grid~$X^0\in F^{+}_N$, every trajectory~$(X^0,X^1,X^2,\ldots)$ of the CLVQ algorithm is \mbox{a.s.} such that $X^t\in F^{+}_N$ for all~$t$.
%
If $P_X$ is absolutely continuous and if there are finitely many grids~$x(\in\bar F_N^{+})$ such that $\nabla D_N^{2,P_X}(x)=0$, then
$
X^t\xrightarrow{\text{a.s}} x^*
$
as~$t\to\infty$,
with
$\nabla D_N^{2,P_X}(x^*) = 0.
$
\end{enumerate}
\end{theorem}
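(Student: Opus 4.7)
The plan is to treat the two assertions in turn, since (i) is a purely deterministic fact about the energy $D_N^{2,P_X}$ on the closed set $\bar F_N^+$, while (ii) is a stochastic approximation statement for the CLVQ recursion.

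For (i), existence of a minimizer is immediate from continuity of $D_N^{2,P_X}$ on the compact set $\bar F_N^+$. I would first rule out boundary minimizers: because $f>0$ on $(a,b)$, any configuration with a repeated coordinate $x_i=x_{i+1}$ (or with $x_1=a$ or $x_N=b$) can be strictly improved by perturbing the coincident points into the interior of the adjacent Voronoi cell, so a minimizer must lie in $F_N^+$. On $F_N^+$ the Voronoi cells are the explicit intervals $C_i(x)=[(x_{i-1}+x_i)/2,(x_i+x_{i+1})/2]$ (with natural endpoint conventions at $i=1,N$); differentiating $D_N^{2,P_X}(x)=\sum_i\int_{C_i(x)}(x_i-w)^2 f(w)\,dw$ and noting that the interface contributions cancel by continuity of $(x_i-w)^2$ at the cell boundaries yields
\begin{equation*}
\partial_{x_i} D_N^{2,P_X}(x) = 2 \int_{C_i(x)} (x_i - w) f(w) \, dw,
\end{equation*}
so that the stationary equation reduces to the centroid condition $x_i={\rm E}[X\mid X\in C_i(x)]$. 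The main analytical obstacle is promoting uniqueness of this stationary point from the log-concavity assumption on $f$; the approach of \citet{Pages98} is to analyze the tridiagonal Hessian of $D_N^{2,P_X}$, whose entries combine values of $f$ at the cell interfaces with the $P_X$-masses of the cells, and to exploit strict log-concavity (or log-concavity with $f(a+)-f(b-)>0$) to show it is positive definite at every critical point. Consequently each critical point is a strict local minimum, and connectedness of $F_N^+$ forces uniqueness. For the uniform law on $[0,1]$, the centroid system collapses by symmetry to $x_k^*=(2k-1)/(2N)$.

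For the stability claim in (ii), I would induct on $t$. The gradient $\nabla_x d_N^2(x,\xi)$ has exactly one non-zero entry, located at the index $i^\star$ of the point of $x$ closest to $\xi$ and equal to $2(x_{i^\star}-\xi)$, so the update reads $X^t_{i^\star}=(1-\delta_t)X^{t-1}_{i^\star}+\delta_t\xi_t$. Since $\delta_t\in(0,1)$ this is a strict convex combination lying strictly between $X^{t-1}_{i^\star}$ and $\xi_t$. Because $\xi_t$ lies in the open Voronoi cell of $X^{t-1}_{i^\star}$ almost surely (the cell interfaces are $P_X$-null by absolute continuity), the new point stays strictly inside that cell, preserving the strict ordering with the unchanged neighbours, so $X^t\in F_N^+$ for all $t$ almost surely.

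For the convergence assertion, I would recognize the recursion (\ref{grad_sto_form}) at $p=2$ as a Robbins--Monro stochastic approximation with mean field $-\nabla D_N^{2,P_X}$, Lyapunov function $D_N^{2,P_X}$, and martingale-difference noise $\nabla_x d_N^2(X^{t-1},\xi_t)-\nabla D_N^{2,P_X}(X^{t-1})$. The step sizes satisfy $\sum\delta_t=\infty$ and $\sum\delta_t^2<\infty$ by assumption; the iterates remain in a compact set (bounded by the convex hull of $X^0\cup\mathrm{supp}\,P_X$, since the active coordinate only moves toward $\xi_t$), and the conditional variance of the noise is bounded since $P_X$ has compact support. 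The classical ODE / Kushner--Clark framework then gives that the almost-sure limit set of $(X^t)$ is a compact connected subset of the zero set of $\nabla D_N^{2,P_X}$ in $\bar F_N^+$; by the finiteness hypothesis this limit set reduces to a single point $x^*$, which is therefore the almost-sure limit of $X^t$. The principal obstacle here is verifying the regularity conditions of the stochastic approximation theorem in this piecewise-smooth setting---in particular controlling the behaviour near the boundary of $F_N^+$, where the Voronoi combinatorics can change---so that the mean field is sufficiently regular along the relevant trajectories.
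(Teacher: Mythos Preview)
The paper does not prove this theorem at all: it is stated with the attribution ``\citealp{Pages98}, Th.~27'' and then used as a black box, with no argument given in the text or the appendix. So there is no ``paper's own proof'' to compare your proposal against.

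That said, your sketch is a reasonable outline of the original \citet{Pages98} argument: the centroid characterization of critical points, the tridiagonal Hessian analysis under log-concavity for uniqueness in~(i), the convex-combination update preserving strict ordering for the stability part of~(ii), and the Kushner--Clark/ODE method for the convergence part of~(ii) are all the right ingredients. The genuine technical difficulties you flag---positive definiteness of the Hessian at every critical point, and controlling the stochastic approximation near the boundary $\partial F_N^+$ where the gradient is only piecewise smooth---are exactly the nontrivial steps in \citet{Pages98}; your proposal identifies them but does not carry them out, so as written it is a proof plan rather than a proof. For the purposes of this paper, however, none of this is needed: the authors simply invoke the result.
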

\vspace{2mm}

Part~(i) of the result provides a particular family of distributions for which the optimal grid is unique (recall that existence always holds). Beyond stating that trajectories of the CLVQ algorithm live in~$F_N^{+}$ (with grids that therefore stays of size~$N$), Part~(ii) of the result provides mild conditions under which the algorithm almost surely provides a limiting grid that is a critical point of the quantization error, hence, under the assumptions of Part~(i), is optimal. 

Unfortunately, the picture is less clear for the multivariate case~($d>1$). While it is still so that the grid~$X^t$ will have pairwise distinct components for any~$t$, some of the components of the limiting grid~$x^*$, if any, may coincide. 
\begin{itemize}
\item[(a)] If, parallel to the univariate case, this does not happen, then the \mbox{a.s.} convergence of~$X^t$ to a critical point of the quantization error~$D_N^{2,P_X}(\cdot)$ can be established under the assumption that~$P_X$ has a bounded density with a compact and convex support. 
\item[(b)] Otherwise, no convergence results are available; the only optimality results that can then be obtained relate to approximations involving grids of size~$k<N$, where~$k$ is the number of distinct components in the limiting grid~$x^*$, which is quite different from the original $N$-quantization problem considered initially.
\end{itemize}
The interested reader may refer to~\citet{Pages98} for details.
For practical purposes, though, one should not worry to much, as all numerical exercices we conducted were compatible with case (a) (with increasing~$t$, the smallest distance between two components of~$X^t$ always seemed to stabilize rather than decreasing to zero).  


\section{Conditional quantiles through optimal quantization}\label{sect_approx}

Let us come back to the regression setup involving a scalar response~$Y$ and a $d$-dimensional vector of covariates~$X$, and consider the conditional quantile functions~$q_\alpha(\cdot)$ in~(\ref{condquant}). It is well-known that
\begin{equation}
\label{eq_def2_cond_qu}
q_\alpha(x)=\arg\textstyle{\min_{a\in\R}}\,{\rm E}\big[\rho_\alpha(Y-a)|X=x\big]
,
\end{equation}
where 
$z\mapsto \rho_\alpha(z)
=-(1-\alpha)z\mathbb I_{[z<0]}+\alpha z\mathbb I_{[z\ge 0]}
= z\big(\alpha-\mathbb I_{[z<0]}\big)$ is the so-called \emph{check function}. As we now explain, this allows to use optimal quantization to approximate conditional quantiles. 

To do so, fix~$p\geq 1$ such that $\|X\|_p<\infty$. Then, for any positive integer~$N$, one may consider the approximation
\begin{equation}
\label{quantizedquantile}
\widetilde{q}^{N}_\alpha(x)
=
\arg\min_{a\in \R}  
{\rm E}\big[\rho_\alpha(Y-a)|\quant{X}=\tilde{x}\big],
\end{equation}
where $\quant{X}$ and $\tilde x$ are the projections of $X$ and $x$ respectively onto an $L_p$-optimal $N$-grid. Since $\quant{X}-X$ goes to zero as~$N\to\infty$, one may expect that~$\widetilde{q}^{N}_\alpha(x)$ provides a better and better approximation of~$q_\alpha(x)$ as~$N$ increases. The main goal of this section is to quantify the quality of this approximation. 

We will need the following assumptions.
\vspace{3mm}

{\sc Assumption~(A)}
(i) 
The random vector~$(X,Y)$ is generated through $Y=m(X,\varepsilon)$, where the $d$-dimensional covariate vector~$X$ and the error~$\varepsilon$  are mutually independent;
(ii) 
the link function $(x,z)\mapsto m(x,z)$ is of the form $m_1(x) + m_2(x) z$, where the functions $m_1(\cdot):\R^d\to\R$ and $m_2(\cdot):\R^d\to\R^+_0$ are Lipschitz functions;
(iii)
$\|X\|_p<\infty$ and $\|\varepsilon\|_p<\infty$; 
(iv) 
the distribution of $X$ does not charge any hyperplane. 
\vspace{3mm}

Note that Assumption~(A)(ii)-(iii) directly implies that there exists $C>0$ such that the link function $m(\cdot,\varepsilon)$ of the model above satisfies 
\begin{equation}
\label{lipc}
\forall u, v \in\R^d, \ \|m(u,\varepsilon) - m(v, \varepsilon)\|_p \le C |u-v|.
\end{equation}
The resulting Lipschitz constant --- that is, the smallest real number~$C$ for which~(\ref{lipc}) holds --- is $[m]_{\mathrm{Lip}}= [m_1]_{\mathrm{Lip}} +[m_2]_{\mathrm{Lip}} \| \varepsilon \|_p$, where $[m_1]_{\mathrm{Lip}}$ and $[m_2]_{\mathrm{Lip}}$ are the corresponding  Lipschitz constants of~$m_1$ and~$m_2$, respectively. 
\vspace{3mm}

{\sc Assumption (B)}  (i) The support $\suppX$ of~$P_X$ is compact; 
(ii) $\varepsilon$ admits a continuous density~$f^\varepsilon:\R\to\R^+_0$ (with respect to the Lebesgue measure on~$\R$).
\vspace{3mm}

To obtain rates of convergence, we will need the following reinforcement of Assumption~(A). 
\vspace{-2mm}

{\sc Assumption~(A$^\prime$)}
Same as Assumption~(A), but with (iii) replaced by (iii)$^\prime$ there exists $\delta>0$ such that $\|X\|_{p+\delta}<\infty$, and $\|\varepsilon\|_p<\infty$.
\vspace{3mm}


We can then prove the following result (see the Appendix for the proof).


\begin{theorem}
\label{rate_conv_q_X}
Fix~$\alpha\in(0,1)$. Then (i) under Assumptions~(A)-(B), 
\vspace{2mm}
$$
\| \widetilde q_\alpha^{N}(X) - q_\alpha(X)\|_p\le 
2\,
\sqrt{
\max\Big(\frac{\alpha}{1-\alpha},\frac{1-\alpha}{\alpha} \Big)
} 
\,
[m]_{\mathrm{Lip}}^{1/2}
\left\|L^N(X)\right\|_p^{1/2}
\|X-\quant{X}\|^{1/2}_p,
\vspace{2mm}
$$
for $N$ sufficiently large, where $(L^N(X))$ is a sequence of $X$-measurable random variables that is bounded in $L_p$;
(ii) under Assumptions~(A$^\prime$)-(B),
$$
\| \widetilde q_\alpha^{N}(X) - q_\alpha(X)\|_p = O(N^{-1/2d}),
$$
as $N \to \infty$.
\end{theorem}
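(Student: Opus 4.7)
The approach is the classical argmin-stability strategy: control the objective gap between the population and quantized check-function criteria, then convert it into a pointwise deviation via a quadratic lower bound. Introduce
$$\Phi(a,x) := E[\rho_\alpha(Y-a)\mid X=x], \qquad \Phi^N(a,X) := E[\rho_\alpha(Y-a)\mid \widetilde X^N] = E\bigl[\Phi(a,X)\bigm|\sigma(\widetilde X^N)\bigr],$$
so that $q_\alpha(X) = \arg\min_a \Phi(a,X)$ and $\widetilde q_\alpha^N(X) = \arg\min_a \Phi^N(a,X)$.

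The first step is a uniform-in-$a$ $L_p$ bound on $\Phi - \Phi^N$. Because $\rho_\alpha$ is $\max(\alpha,1-\alpha)$-Lipschitz and $X,\varepsilon$ are independent, an application of Jensen over $\varepsilon$ combined with~(\ref{lipc}) yields the pointwise estimate $|\Phi(a,x) - \Phi(a,x')| \le \max(\alpha,1-\alpha)\,[m]_{\mathrm{Lip}}\,|x-x'|$. Writing
$$\Phi(a,X) - \Phi^N(a,X) = \bigl\{\Phi(a,X) - \Phi(a,\widetilde X^N)\bigr\} - E\bigl[\Phi(a,X) - \Phi(a,\widetilde X^N)\bigm|\sigma(\widetilde X^N)\bigr]$$
and applying conditional Jensen, one obtains $\|\Phi(a,X) - \Phi^N(a,X)\|_p \le 2\max(\alpha,1-\alpha)\,[m]_{\mathrm{Lip}}\,\|X - \widetilde X^N\|_p$ uniformly in~$a$. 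By the optimality of $\widetilde q_\alpha^N(X)$ for $\Phi^N(\cdot,X)$,
$$0 \;\le\; \Phi(\widetilde q_\alpha^N(X),X) - \Phi(q_\alpha(X),X) \;\le\; \bigl|\Phi - \Phi^N\bigr|(\widetilde q_\alpha^N(X),X) + \bigl|\Phi^N - \Phi\bigr|(q_\alpha(X),X),$$
so the $L_p$ norm of this objective gap $\Delta(X)$ is at most $4\max(\alpha,1-\alpha)\,[m]_{\mathrm{Lip}}\,\|X - \widetilde X^N\|_p$.

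The second step is the quadratic lower bound. Assumption~(A)(ii) yields the closed form $q_\alpha(x) = m_1(x) + m_2(x)\varepsilon_\alpha$ with $\varepsilon_\alpha := F_\varepsilon^{-1}(\alpha)$, together with the identity $\Phi(a,x) - \Phi(q_\alpha(x),x) = \int_{q_\alpha(x)}^{a}\bigl(F_\varepsilon((u-m_1(x))/m_2(x)) - \alpha\bigr)\,du$. Continuity and strict positivity of $f^\varepsilon$ at~$\varepsilon_\alpha$ (Assumption~(B)(ii)), together with the compactness of $S_X$ and the Lipschitz property of $m_1,m_2$, imply that for $N$ large enough the random interval spanned by $\varepsilon_\alpha$ and $(\widetilde q_\alpha^N(X) - m_1(X))/m_2(X)$ sits almost surely in a fixed neighbourhood of $\varepsilon_\alpha$ on which $f^\varepsilon$ is bounded below. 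This delivers
$$\Delta(X) \;\ge\; \frac{1}{L^N(X)}\,\bigl(\widetilde q_\alpha^N(X) - q_\alpha(X)\bigr)^2,$$
with $L^N(X)$ proportional to $m_2(X)$ divided by that infimum of $f^\varepsilon$; boundedness of $m_2$ on~$S_X$ and convergence of the infimum to $f^\varepsilon(\varepsilon_\alpha)>0$ give $\sup_N \|L^N(X)\|_p < \infty$.

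Combining, $|\widetilde q_\alpha^N(X) - q_\alpha(X)|^p \le L^N(X)^{p/2}\,\Delta(X)^{p/2}$; taking expectation and applying Cauchy--Schwarz gives $\|\widetilde q_\alpha^N(X) - q_\alpha(X)\|_p \le \|L^N(X)\|_p^{1/2}\,\|\Delta(X)\|_p^{1/2}$. Substituting the Step~1 bound on $\|\Delta(X)\|_p$ and carefully tracking the $\alpha,1-\alpha$ factors arising from both the Lipschitz constant of $\rho_\alpha$ and the exact form of $L^N(X)$ yields the displayed inequality of~(i). Part~(ii) is then immediate: Corollary~\ref{rate_conv_qu} gives $\|X - \widetilde X^N\|_p = O(N^{-1/d})$ under Assumption~(A$^\prime$), so that (i) together with $\|L^N(X)\|_p^{1/2} = O(1)$ delivers the rate $O(N^{-1/2d})$. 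The main technical obstacle lies in Step~2: establishing the uniform-in-$N$ $L_p$ bound on $L^N(X)$ requires an a~priori pointwise smallness of $\widetilde q_\alpha^N(X) - q_\alpha(X)$ sufficient to keep the relevant $f^\varepsilon$-infimum bounded below, which rests on the compactness of $S_X$ together with a preliminary convergence of the quantized quantile to its population counterpart.
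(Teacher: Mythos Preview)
Your proposal is correct and follows essentially the same strategy as the paper: bound the objective gap $\|G_{\tilde q}(X)-G_q(X)\|_p$ via the Lipschitz property of $\rho_\alpha$ and conditional Jensen (the paper's Lemmas~\ref{rate_conv_E_X}--\ref{rate_conv_min_X}), establish a quadratic lower bound $\Delta(X)\ge |\tilde q-q|^2/L^N(X)$, combine via Cauchy--Schwarz, and invoke the preliminary uniform convergence (the paper's Theorem~\ref{conv_q_X_x}, via Lemma~\ref{lemmeK}) to control $\|L^N(X)\|_p$. The only cosmetic difference is in how the quadratic lower bound is obtained: you use the integral identity $\Phi(a,x)-\Phi(q,x)=\int_q^a(F(u|x)-\alpha)\,du$ together with a density infimum, whereas the paper uses an elementary check-function inequality yielding an explicit $\min(\alpha,1-\alpha)$ factor and then the mean value theorem---both routes lead to the same structure, and any discrepancy in the $\alpha$-dependent constant can be absorbed into the (unspecified) $L^N(X)$.
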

\vspace{1mm}


%

Of course, fixed-$x$ consistency results are also quite appealing in quantile regression. Such a result is provided in the following theorem (see the Appendix for the proof). 

\begin{theorem}
\label{conv_q_X_x}
Fix $\alpha\in(0,1)$. Then, under Assumptions~(A)-(B), 
$$
\sup_{x\in \suppX}
\big|
\widetilde q_\alpha^{N}(x) 
-
q_\alpha(x)
\big|
\to
0,
$$
as $N \to \infty$.
\end{theorem}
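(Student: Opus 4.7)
The plan is to reduce the uniform-in-$x$ quantile convergence to a uniform-in-$x$ convergence of the conditional CDFs, then invert via a sandwich argument. The linchpin is the following lemma, which I would establish first: the grid fills the support in the uniform sense
\begin{equation*}
\epsilon_N := \sup_{x\in\suppX} |x-\tilde{x}| \longrightarrow 0 \quad \text{as } N\to\infty,
\end{equation*}
where $\tilde{x}$ denotes the projection of $x$ onto the optimal $N$-grid. I would argue by contradiction: if $\epsilon_N \ge \epsilon$ along some subsequence $N_k$, pick $x_k\in\suppX$ with $B(x_k,\epsilon)$ disjoint from the $N_k$-grid; compactness of $\suppX$ (Assumption~(B)(i)) yields a cluster point $x_*\in\suppX$ such that, along a further subsequence, the ball $B(x_*,\epsilon/2)$ is grid-free. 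Since $x_*$ lies in the support of $P_X$, the inner mass $p_*:=P_X(B(x_*,\epsilon/4))$ is strictly positive, whence $\|X-\quant{X}\|_p^p \ge (\epsilon/4)^p p_*$ along this subsequence, contradicting Corollary~\ref{rate_conv_qu} (which applies since compactness of $\suppX$ forces $\|X\|_{p+\delta}<\infty$ for every $\delta>0$).

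The independence of $X$ and $\varepsilon$ combined with the structure in Assumption~(A)(ii) then produces the closed forms
\begin{equation*}
F(y|x)=F_\varepsilon\!\left(\frac{y-m_1(x)}{m_2(x)}\right), \qquad q_\alpha(x)=m_1(x)+m_2(x)\,q_\alpha^\varepsilon,
\end{equation*}
with $q_\alpha^\varepsilon:=\inf\{z:F_\varepsilon(z)\ge\alpha\}$ and $F_\varepsilon$ the CDF of $\varepsilon$. In particular $q_\alpha(\cdot)$ is Lipschitz on $\suppX$, and $m_2$ admits a strictly positive minimum $m_2^-$ and a finite maximum $m_2^+$ on the compact $\suppX$. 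Since the conditional CDF of $Y$ given $\quant{X}=\tilde{x}$ is the $P_X$-weighted average
\begin{equation*}
F^N(y|x)=\frac{1}{P_X(C(\tilde{x}))}\int_{C(\tilde{x})} F_\varepsilon\!\left(\frac{y-m_1(u)}{m_2(u)}\right) P_X(du)
\end{equation*}
over the Voronoi cell $C(\tilde{x})$, and since any $u,x\in C(\tilde{x})\cap\suppX$ satisfy $|u-x|\le 2\epsilon_N$ by Step~1, the Lipschitz continuity of $m_1,m_2$ and the local Lipschitz continuity of $F_\varepsilon$ (from continuity of $f^\varepsilon$) deliver
\begin{equation*}
\sup_{x\in\suppX}\ \sup_{|y-q_\alpha(x)|\le 1}\ \big|F^N(y|x)-F(y|x)\big| \le K\,\epsilon_N \longrightarrow 0
\end{equation*}
for an explicit constant $K$ depending on the bounds of $m_1,m_2$ and of $f^\varepsilon$ on the relevant compact sets.

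To finish, fix $\eta\in(0,1)$ and note that, since $f^\varepsilon>0$ everywhere (Assumption~(B)(ii)), $F_\varepsilon$ is strictly increasing, so
\begin{equation*}
\delta := \min\!\big(F_\varepsilon(q_\alpha^\varepsilon+\eta/m_2^+)-\alpha,\ \alpha-F_\varepsilon(q_\alpha^\varepsilon-\eta/m_2^+)\big) > 0.
\end{equation*}
The closed form for $F(\cdot|x)$ then gives $F(q_\alpha(x)+\eta|x)\ge\alpha+\delta$ and $F(q_\alpha(x)-\eta|x)\le\alpha-\delta$ for every $x\in\suppX$, and the uniform CDF convergence promotes these to $F^N(q_\alpha(x)+\eta|x)>\alpha$ and $F^N(q_\alpha(x)-\eta|x)<\alpha$ once $N$ is large, uniformly in $x$; the infimum characterization of $\widetilde q_\alpha^N(x)$ then sandwiches it within $\eta$ of $q_\alpha(x)$. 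I expect Step~1 to be the main obstacle: the $L_p$ decay of $\|X-\quant{X}\|_p$ does not immediately entail $L_\infty$ shrinkage of the Voronoi cells across $\suppX$, and it is the combination of the support characterization of $P_X$ with the compactness of $\suppX$ that carries that step.
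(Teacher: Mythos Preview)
Your proof is correct and proceeds along a genuinely different route from the paper's. Both arguments share the same opening step---the uniform shrinkage $\sup_{x\in\suppX}|x-\tilde x|\to 0$---and your contradiction argument via a cluster point is a minor variant of the paper's, which instead bounds the quantization error from below by $(\varepsilon/2)^p\inf_{y\in\suppX}P_X[\mathcal B(y,\varepsilon/2)]$ and invokes continuity of this map on the compact support to avoid extracting a subsequence. From there the paths diverge: the paper works at the level of the check-loss objective $G_a(x)={\rm E}[\rho_\alpha(Y-a)\mid X=x]$, establishes $\sup_{x}\sup_a|\widetilde G_a(\tilde x)-G_a(x)|\to 0$, deduces $\sup_x|G_{\widetilde q_\alpha^N(x)}(x)-G_{q_\alpha(x)}(x)|\to 0$, and then inverts by a second-order Taylor expansion of $a\mapsto G_a(x)$ at $q_\alpha(x)$, which requires an auxiliary uniform boundedness argument for $|\widetilde q_\alpha^N(x)-q_\alpha(x)|$ carried out by contradiction. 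You instead work directly with the conditional CDFs, exploit the explicit form $F(y\mid x)=F_\varepsilon\big((y-m_1(x))/m_2(x)\big)$ to get a uniform Lipschitz bound on $|F^N(y\mid x)-F(y\mid x)|$ near $q_\alpha(x)$, and finish with a clean sandwich inversion using strict monotonicity of $F_\varepsilon$. Your route is more elementary---it sidesteps both the Taylor expansion and the separate boundedness step---at the cost of leaning slightly harder on the specific structure $Y=m_1(X)+m_2(X)\varepsilon$; the paper's objective-function approach would transfer more readily to link functions beyond the location--scale form, though under the stated Assumption~(A)(ii) the two are equivalent.
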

\vspace{2mm}

Unlike in Theorem~\ref{rate_conv_q_X}, Theorem~\ref{conv_q_X_x} does not provide any rate of convergence. This is a consequence of the fact that, while the convergence of $\quant{X}$ towards $X$ can be shown to imply the convergence of~$\tilde{x}$ towards $x$ for each fixed~$x$, it does not seem possible to show that the rate of convergence in the fixed-$x$ convergence is inherited from the convergence involving~$X$.


\section{Quantized conditional quantile estimators}
\label{sect_estim}

\subsection{The proposed estimators and their consistency}
\label{soussect_estim}

Consider now the problem of estimating the conditional quantile~$q_\alpha(x)$ on the basis of independent copies $(X_1,Y_1), \dots, (X_n,Y_n)$ of~$(X,Y)$. For any~$N(<n)$, the approximation~$\tildeq$ in~(\ref{quantizedquantile}) leads to an estimator~$\widehat q^{N,n}_{\alpha}(x)$ of the conditional quantile~$q_\alpha(x)$, through the following two steps~: 
\begin{itemize}
\item[(S1)]
First, the CLVQ algorithm from Section~\ref{subsubsect_algo} is applied to perform quantization in~$X$. For this purpose, (i)
the initial grid~$X^0$ is obtained by sampling randomly among the $X_i$'s without replacement, and with the constraint\footnote{If the $X_i$'s are \mbox{i.i.d.} with a common density~$f$, sampling without replacement among the $X_i$'s of course implies that this constraint will be met with probability one. One often needs to impose it, however, in real-data examples (due to the possible presence of ties) or when performing bootstrap (see later).} that the same $x$-values cannot be picked more than once; (ii) $n$ iterations are performed, based on~$\xi^t=X_t$, $t=1,\ldots,n$. We write~$\hat \gamma^{N,n}=(\hat x_{1}^{N,n},\dots,\hat x^{N,n}_{N})$ for the resulting grid and~$\widehat{X}^{N,n}=\text{Proj}_{\hat\gamma^{N,n}}(X)$ for the corresponding (empirical) quantization of~$X$; to make the notation less heavy, we will stress dependence on~$n$ in these quantities only when it is necessary.  
\item[(S2)] Second, the approximation 
$
\tildeq
=
\arg\textstyle{\min_{a}}{\rm E}[\rho_\alpha(Y-a)|\quant{X}=\tilde{x}]
$
is then estimated by
$$
\hatq
=
\arg\textstyle{\min_{a}}
\sum_{i=1}^n 
\rho_\alpha(Y_i -a) 
\,
\mathbb I_{[\widehat X^{N}_i=\hat x^N]},
$$
where~$\widehat X^{N}_i=\widehat X^{N,n}_{i}=\text{Proj}_{\hat\gamma^{N,n}}(X_i)$ and $\hat x^{N}=\hat x^{N,n}=\text{Proj}_{\hat\gamma^{N,n}}(x)$. Of course, $\hatq$, in practice, is simply evaluated as the sample $\alpha$-quantile of the~$Y_i$'s whose corresponding~$\widehat{X}^{N}_i$ is equal to~$\hat x^{N}$.  
\end{itemize}

Note that the number of iterations is equal to the sample size~$n$ at hand, so that it is expected that only moderate-to-large~$n$ will provide reasonable approximations of optimal $N$-grids.

For fixed~$N$ (and~$x$), the convergence in probability of~$\hatq$ to~$\tildeq$ as~$n\to\infty$ can be obtained by making use of the convergence results for the stochastic gradient algorithm discussed in Section~\ref{subsect_quantif}. In order to do so, we need to restrict to~$p=2$ (that is, to the CLVQ algorithm) and to adopt the following assumption. 
\vspace{3mm}

{\sc Assumption~(C)}
$P_X$ is absolutely continuous with respect to the Lebesgue measure on~$\mathbb{R}^d$.
\vspace{-3mm}

We then have the following result. 

%


\begin{theorem}
\label{consistth}
Fix $\alpha \in (0,1)$, $x\in \suppX$ and $N\in \mathbb N_0$. Then, under Assumptions (A), (B)(i), and~(C), we have that, as $n\to\infty$, 
$$
|\hatq - \tildeq|\to 0,
$$
in probability, provided that quantization is based on~$p=2$.
\end{theorem}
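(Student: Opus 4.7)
The plan is to prove consistency in two stages that mirror steps (S1) and (S2) of the construction: first show that the CLVQ-generated grid $\hat\gamma^{N,n}$ converges to a deterministic optimal grid $\gamma^N$ as $n\to\infty$, then show that a sample quantile taken within a cell of $\hat\gamma^{N,n}$ converges to the population quantile within the corresponding cell of $\gamma^N$. The restriction to $p=2$ together with Assumptions~(B)(i) and~(C) (compact support, absolute continuity of $P_X$) is precisely what places us in the favourable case~(a) of Section~\ref{subsubCLVQ}, so the convergence results for CLVQ apply; note that the driving inputs $\xi^t=X_t$, $t=1,\ldots,n$, are i.i.d.\ $P_X$ by construction, and that $\hat\gamma^{N,n}$ is $\sigma(X_1,\ldots,X_n)$-measurable since CLVQ does not use the $Y_i$'s.

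\textbf{Step 1: grid and Voronoi-cell convergence.} From the CLVQ convergence results, $\hat\gamma^{N,n}\to\gamma^N$ in probability. Since Voronoi cell boundaries are contained in finitely many hyperplanes and $P_X$ assigns no mass to hyperplanes (Assumption~(A)(iv), strengthened by (C)), the map from grids to their cells is $P_X$-continuous at $\gamma^N$. Writing $V$ and $\widehat V$ for the cells of $\gamma^N$ and $\hat\gamma^{N,n}$ containing $x$, I would deduce that, for any $\varepsilon>0$,
\[
P\big[P_X(\widehat V \mathbin{\triangle} V) > \varepsilon\big] \longrightarrow 0
\qquad\text{as }n\to\infty,
\]
by enveloping $\widehat V \mathbin{\triangle} V$ by the deterministic ``thickened boundary'' of $V$ at radius $\|\hat\gamma^{N,n}-\gamma^N\|$, whose $P_X$-mass vanishes as the radius shrinks (dominated convergence).

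\textbf{Step 2: conditional quantile convergence.} Introduce the population and empirical check-function criteria
\[
\Phi(a) := \mathrm E\big[\rho_\alpha(Y-a)\,\mathbb I[\widetilde X^N = \tilde x]\big],\qquad
\hat\Phi_n(a) := \frac{1}{n}\sum_{i=1}^n \rho_\alpha(Y_i-a)\,\mathbb I[\widehat X_i^N = \hat x^N],
\]
so that, up to multiplication by $P[\widetilde X^N = \tilde x]>0$, the minimizers of $\Phi$ and $\hat\Phi_n$ are $\tildeq$ and $\hatq$, respectively. I would also introduce the oracle version $\hat\Phi_n^\star(a) := n^{-1}\sum_i \rho_\alpha(Y_i-a)\,\mathbb I[\widetilde X_i^N = \tilde x]$ and bound
\[
|\hat\Phi_n(a) - \hat\Phi_n^\star(a)| \le \frac{1}{n}\sum_{i=1}^n |\rho_\alpha(Y_i-a)|\,\mathbb I[X_i \in \widehat V \mathbin{\triangle} V].
\]
The right-hand side tends to $0$ in probability, uniformly on compact sets of $a$, by Step~1 combined with an integrable-envelope argument (Assumption~(A) delivers $\|Y\|_p<\infty$); meanwhile, $\hat\Phi_n^\star(a) - \Phi(a) \to 0$ almost surely by the ordinary SLLN, and convexity of $a\mapsto \rho_\alpha(Y_i-a)$ upgrades pointwise to local uniform convergence. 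A standard convex argmin-consistency argument, using the uniqueness of the minimizer implicit in the definition of $\tildeq$, then delivers $\hatq\to\tildeq$ in probability.

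\textbf{Main obstacle.} The subtle point is that the same data $(X_1,\ldots,X_n)$ serve both to fit the grid and to form the conditional sample quantile, so for any fixed $i$ the cell indicator $\mathbb I[X_i\in\widehat V]$ depends on $X_i$ itself. Two observations will resolve this: first, since CLVQ uses only the $X_i$'s, Assumption~(A)(i) preserves the conditional distribution of the $Y_i$'s given the $X_i$'s and decouples the two layers of randomness; second, rather than work with the specific random set $\widehat V$, the indicator $\mathbb I[X_i \in \widehat V \mathbin{\triangle} V]$ is controlled by a supremum over all grids $\gamma'$ in a shrinking ball around $\gamma^N$, turning a random-set problem into a deterministic bound on $P_X$-cell stability. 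With these in place, the convex argmin-consistency machinery completes the proof.
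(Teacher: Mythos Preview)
Your proposal is correct and follows essentially the same route as the paper: decompose the empirical criterion into an ``oracle'' piece built on the limiting grid $\gamma^N$ (handled by a law of large numbers) and a remainder controlled by the symmetric difference of Voronoi cells (handled by grid convergence), then conclude via a convexity-based argmin argument. The only methodological differences are cosmetic: the paper invokes a uniform LLN directly (rather than upgrading pointwise convergence via convexity) and reaches the argmin conclusion by first showing $\widetilde G_{\hat q}(\tilde x)\to\widetilde G_{\tilde q}(\tilde x)$ and then appealing to strict convexity of $a\mapsto\widetilde G_a(\tilde x)$, whereas you invoke the standard argmin-of-convex-functions device in one step. One caution: you write that uniqueness of the minimizer is ``implicit in the definition of $\tildeq$''; it is not, and must be derived from strict convexity of $a\mapsto\widetilde G_a(\tilde x)$, which in turn requires the conditional law of $Y$ given $\widetilde X^N=\tilde x$ to be atomless---the paper glosses over this point as well (and in fact misstates that $\rho_\alpha$ itself is strictly convex).
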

\vspace{2mm}

In the previous section, we showed that, as~$N\to \infty$, $\tildeq-q_\alpha(x)$ goes to zero almost surely, hence in probability. Theorem~\ref{consistth} then suggests that  $\hatq - q_\alpha(x)$ might go to zero in probability as both~$n$ and~$N$ go to infinity in some appropriate way. Obtaining such a double asymptotic results, however, is extremely delicate, since, to the best of our knowledge, all convergence results for the stochastic gradient algorithm in the literature are as $n\to \infty$ with~$N$ fixed.

\subsection{Numerical example and bootstrap modification}

For the sake of illustration, we evaluated the estimator~$\hatq$ for~$N=10,25$, and $40$, in a sample of $n=300$ mutually independent observations~$(X_i,Y_i)$ obtained from the model 
\begin{equation}
\label{modelillustr}
Y= \frac 1 5 X^3 + \varepsilon
,
\end{equation}
 where $X=6Z-3$, with $Z\sim \rm {Beta}(0.3,0.3)$, and~$\varepsilon\sim \mathcal N(0,1)$ are independent. The left panels in Figure~\ref{n=300,N=10,25,50} plot the corresponding quantiles curves~$x\mapsto \hatq$ for~$\alpha=0.05, 0.25, 0.5, 0.75$, and $0.95$ (actually, these curves were evaluated only at 300 equispaced points in~$(-3,3)$). 
It is seen that the number of quantizers~$N$ used has an important impact on the curves. For small~$N$, the curves are not smooth and show a large bias. For large~$N$, bias is smaller but the variability is large. One should keep in mind that, for large~$N$, the grid provided by the CLVQ algorithm poorly approximates the corresponding optimal $N$-grid, since a fixed number~$n$ of iterations are used in the algorithm (that should not be too small compared to $N$).

Smoother quantile curves can be obtained from the bootstrap, through the following procedure. For some integer~$B$, we first generate  grids~$\hat \gamma^{N,n}_b$, $b=1,\ldots,B$ (each of size~$N$), as follows from the CLVQ algorithm: first, we sample~$N$ observations with replacement from the initial sample~$X_1,\ldots,X_n$ to generate an initial grid~$X_b^0$ (with the same constraint as in~(S1) above that the~$N$ values obtained are pairwise different); second, we perform iterations based on $\xi^t_b$, $t=1,\ldots,n$, that are similarly obtained by sampling with replacement from~$X_1,\ldots,X_n$ (this time, without any constraint). This allows to consider the bootstrap estimators 
$$
\barq=\frac{1}{B}\sum_{b=1}^B \hat q^{(b)}_{\alpha}(x),
$$
where $\hat q^{(b)}_{\alpha}(x)=\hat q^{(b),N,n}_{\alpha}(x)$ is obtained by performing~(S2) based on the original sample~$(X_i,Y_i)$, $i=1,\ldots,n$, and the grid~$\hat \gamma^{N,n}_b$. Bootstrapping, thus, focuses on the construction of the grids. 

The right panels of Figure~\ref{n=300,N=10,25,50} plot the resulting bootstrapped quantile curves~$x\mapsto \barq$ for the same values of~$\alpha$ and~$N$ as in the original (non-bootstrapped) versions. Bootstrapping clearly smooths all curves, 
and
moreover significantly reduces boundary effects for small~$N$. These advantages require to take~$B$ large enough. But of course, very large values of~$B$ should be avoided in order to keep the computational burden under control.

\begin{figure}
\begin{center}
$
\begin{array}{cc}
\includegraphics[width=7.33cm]{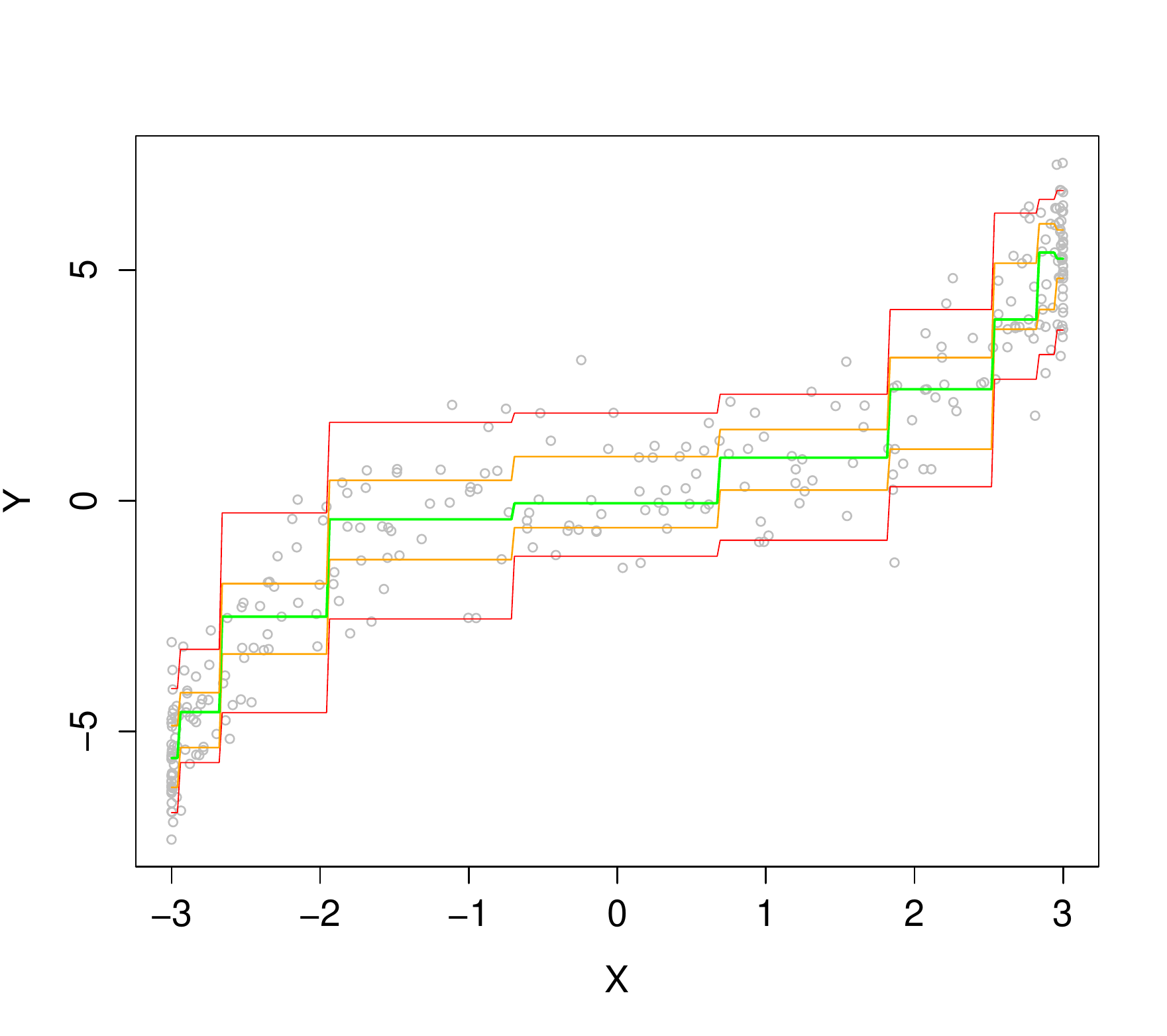}
&
\hspace{2mm}
\includegraphics[width=7.33cm]{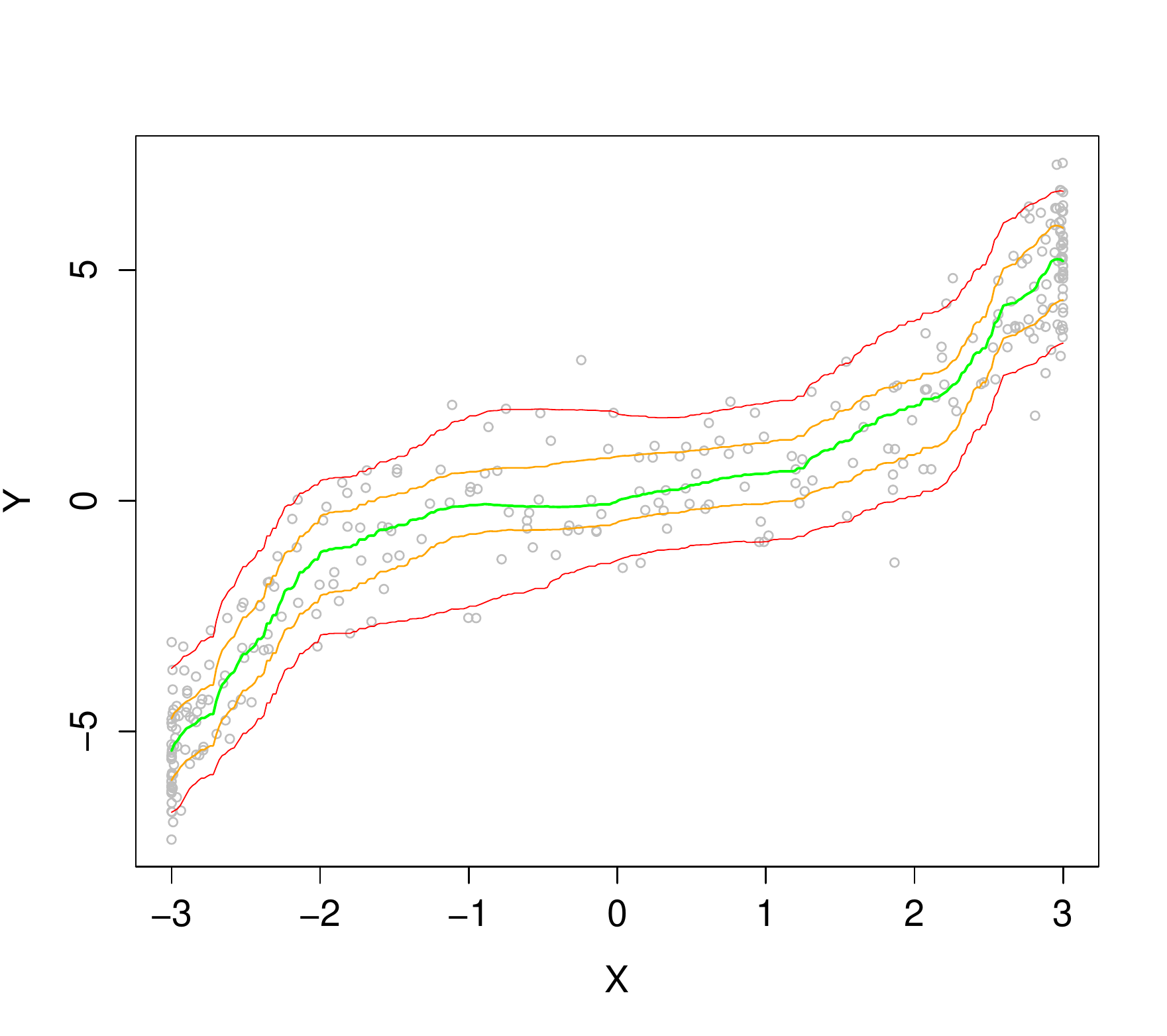}
\\[2mm]
\includegraphics[width=7.33cm]{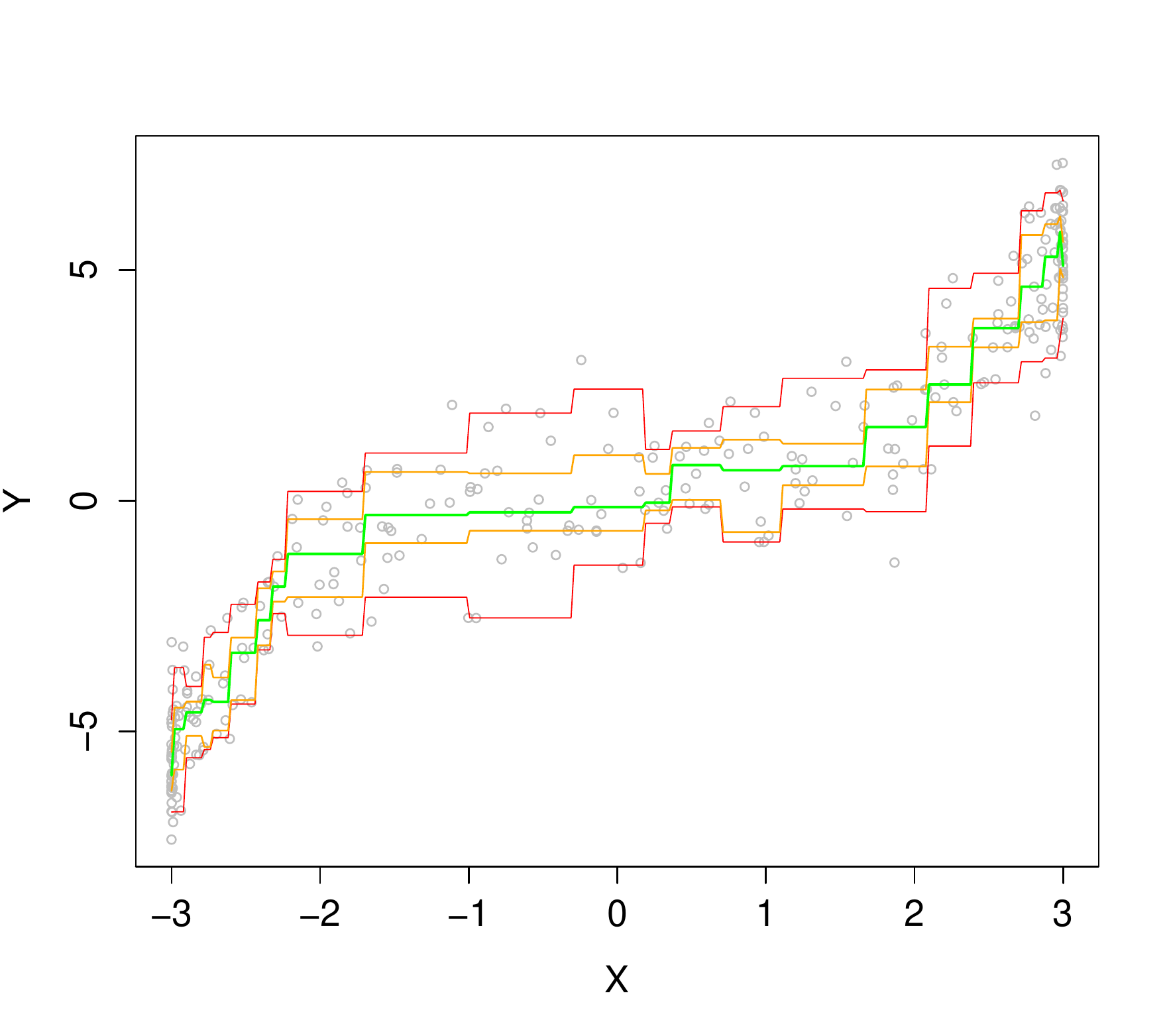}
&
\hspace{2mm}
\includegraphics[width=7.33cm]{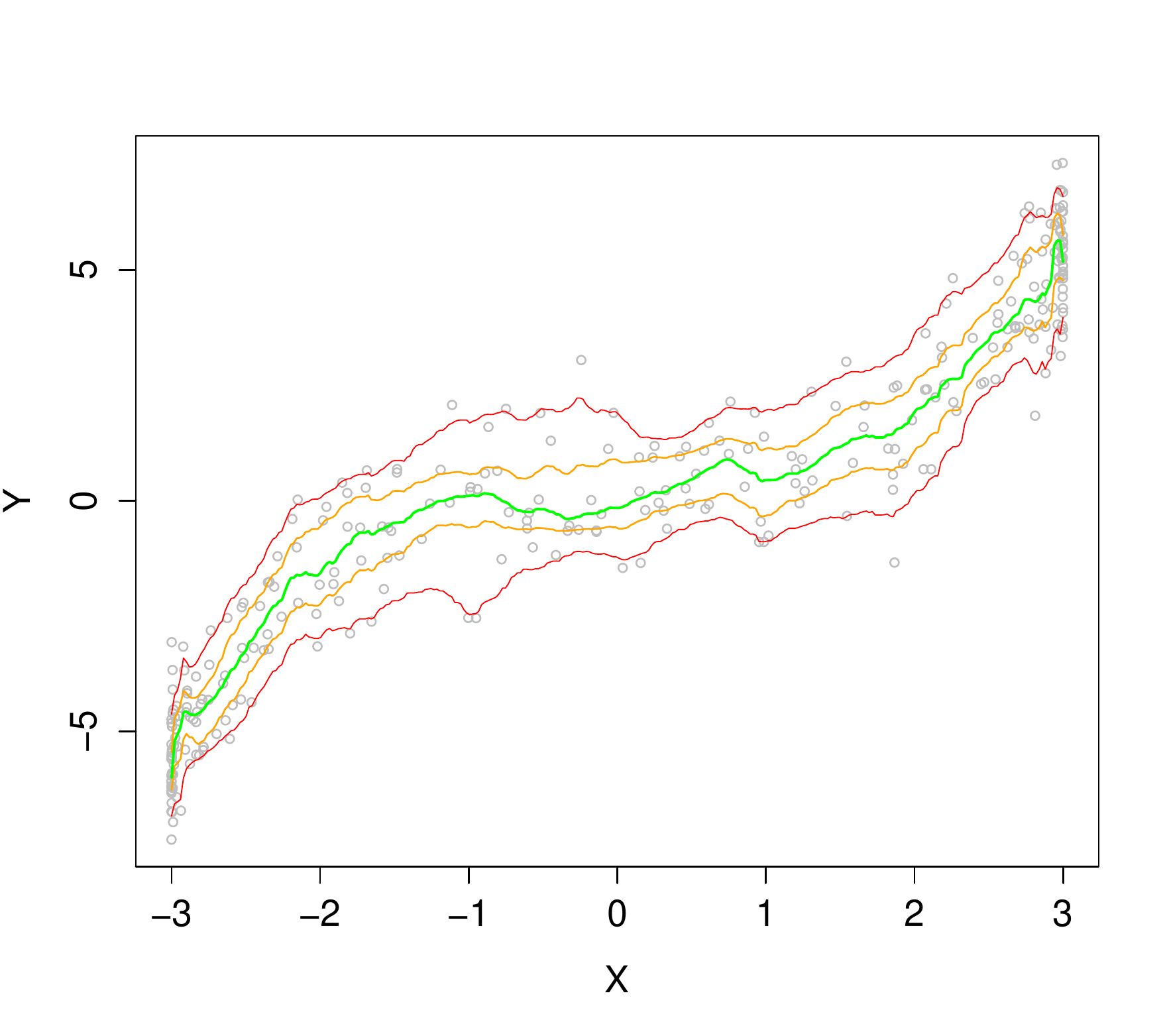}
\\[2mm]
\includegraphics[width=7.33cm]{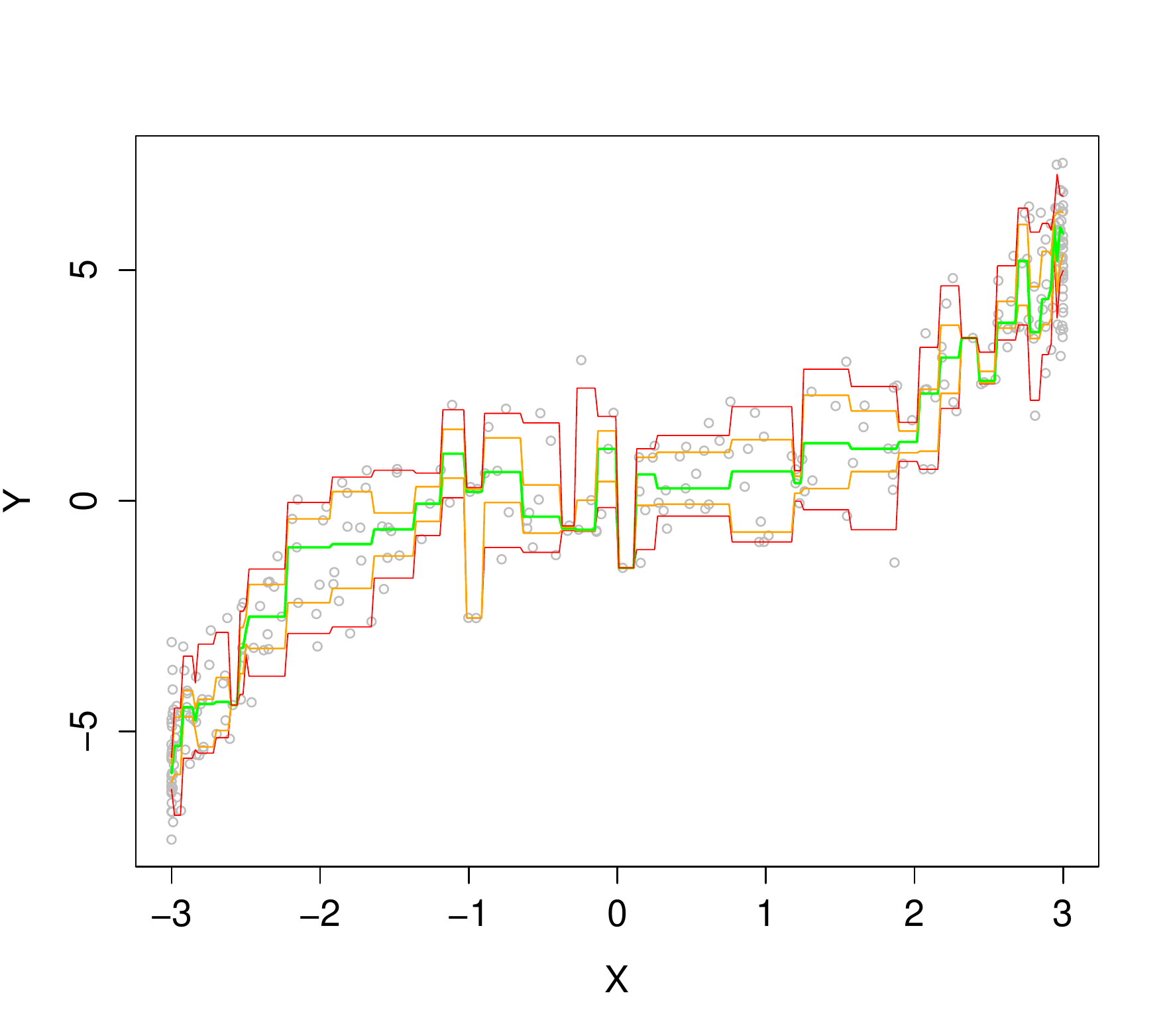}
&
\hspace{2mm}
\includegraphics[width=7.33cm]{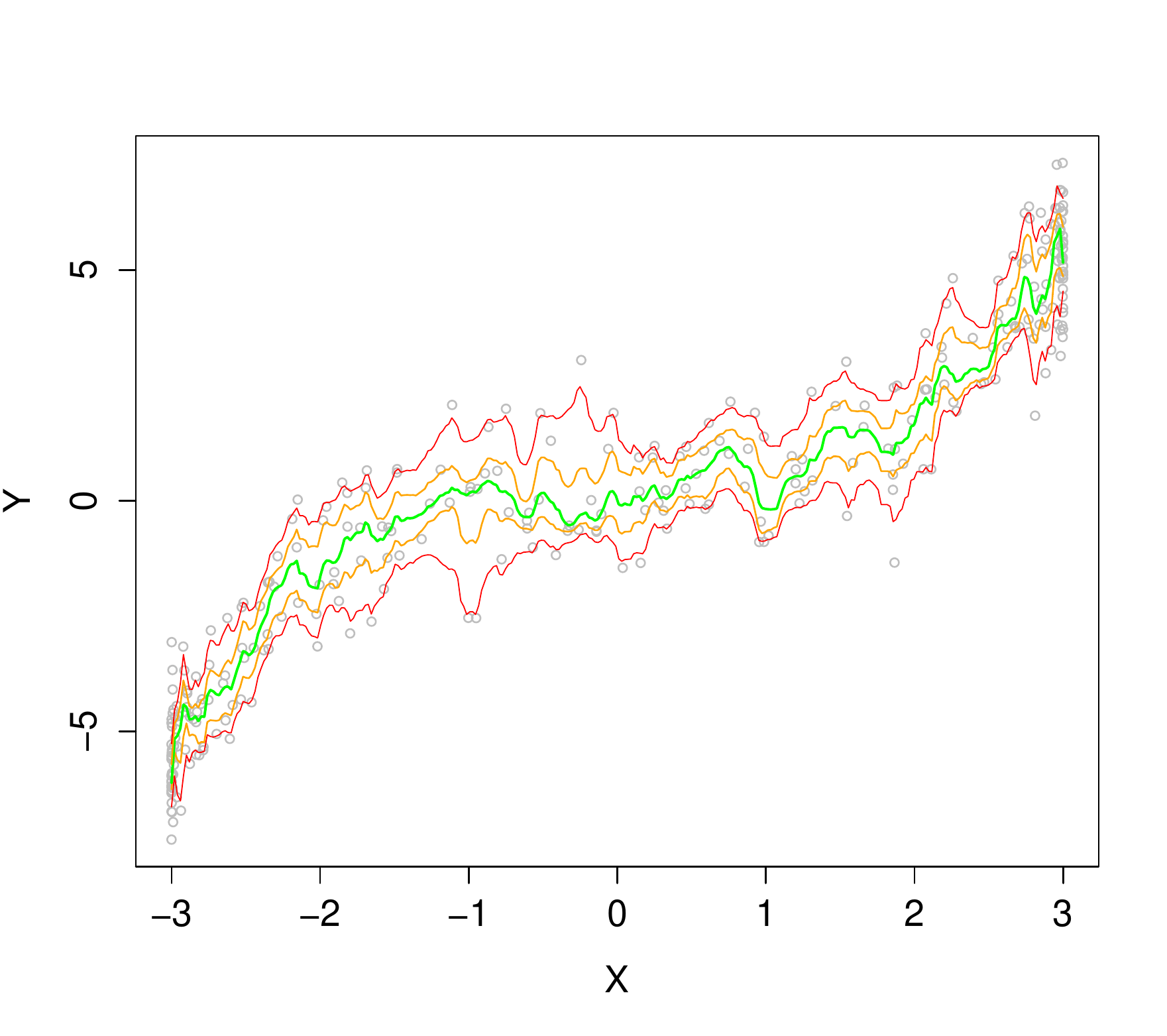}
\end{array}
$
\end{center}
\caption{\footnotesize{Estimated conditional quantiles curves $x\mapsto \hatq$ (left) and their bootstrapped counterparts $x\mapsto\barq$ for $B=50$ (right), based on~$N=10$ (top), $N=25$ (middle), and $N=50$ (bottom).
The sample size is~$n=300$, and the quantiles levels considered are~$\alpha$=0.05, 0.25, 0.5, 0.75, and 0.95. See~(\ref{modelillustr}) for the data generating model.}
\label{n=300,N=10,25,50}}
\end{figure}

\begin{figure}
\begin{center}
$
\begin{array}{cc}
\includegraphics[width=7.33cm]{n=300,beta0,3,B=50,N=25}
&
\hspace{2mm}
\includegraphics[width=7.33cm]{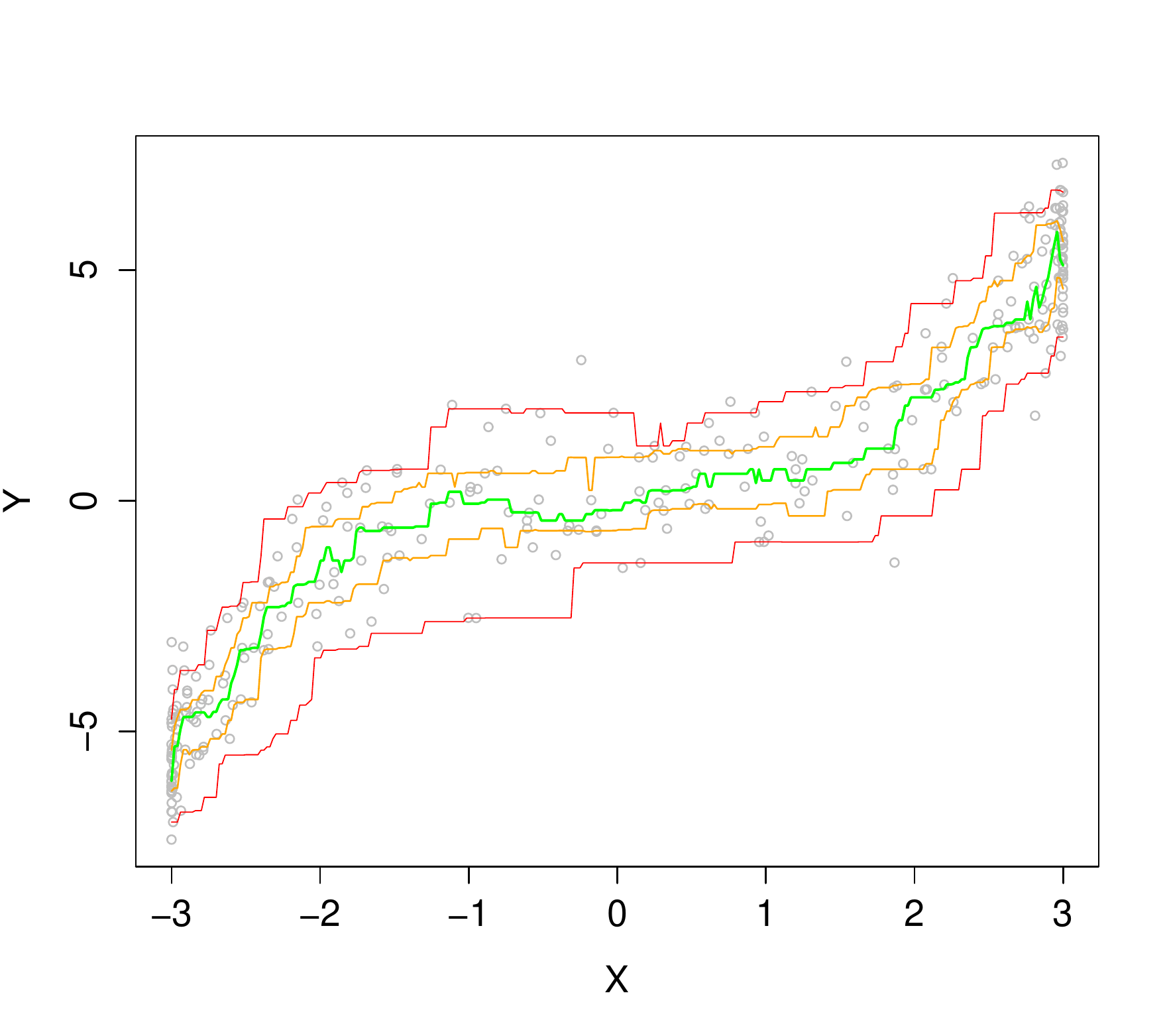}
\\[2mm]
\includegraphics[width=7.33cm]{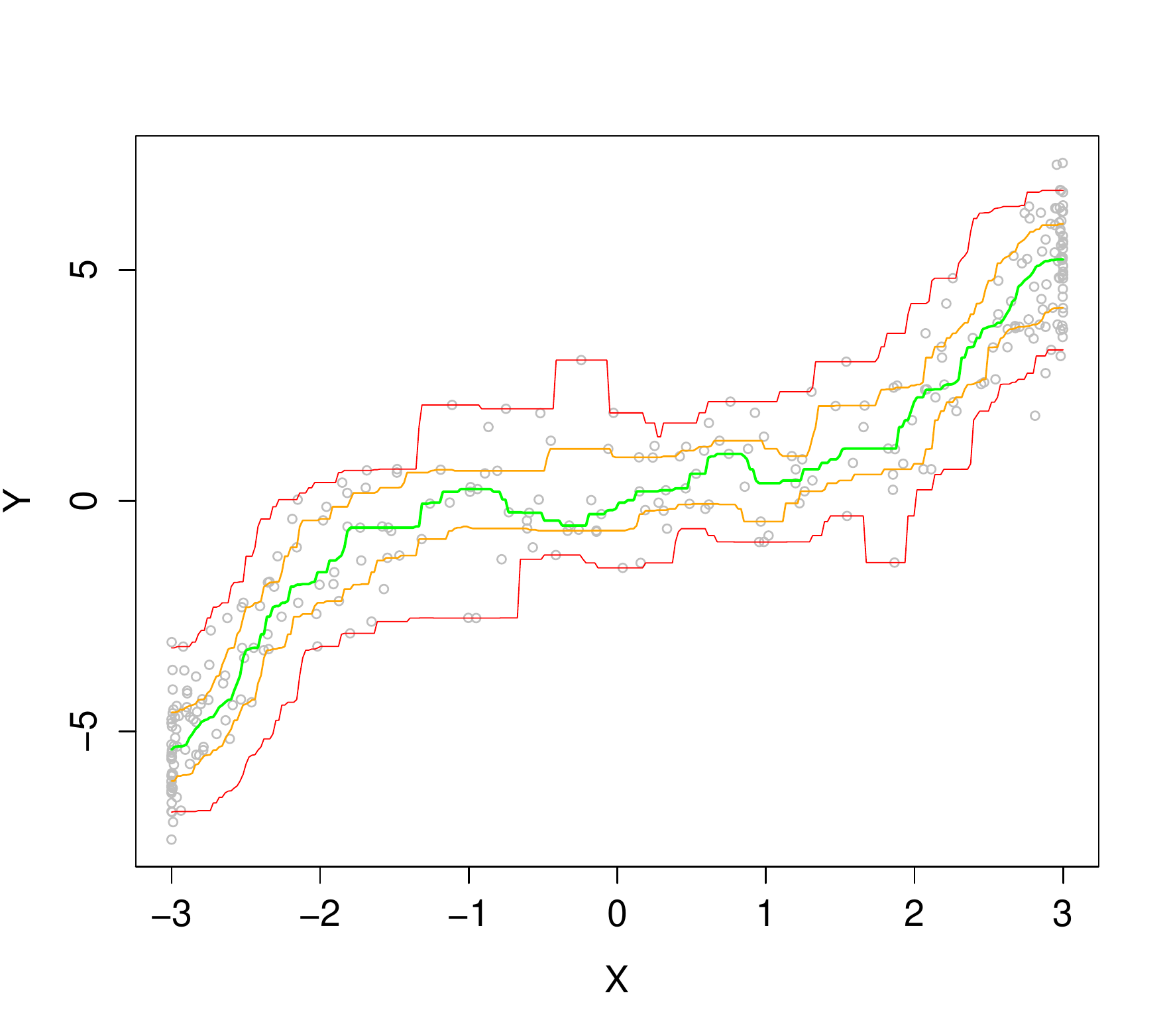}
&
\hspace{2mm}
\includegraphics[width=7.33cm]{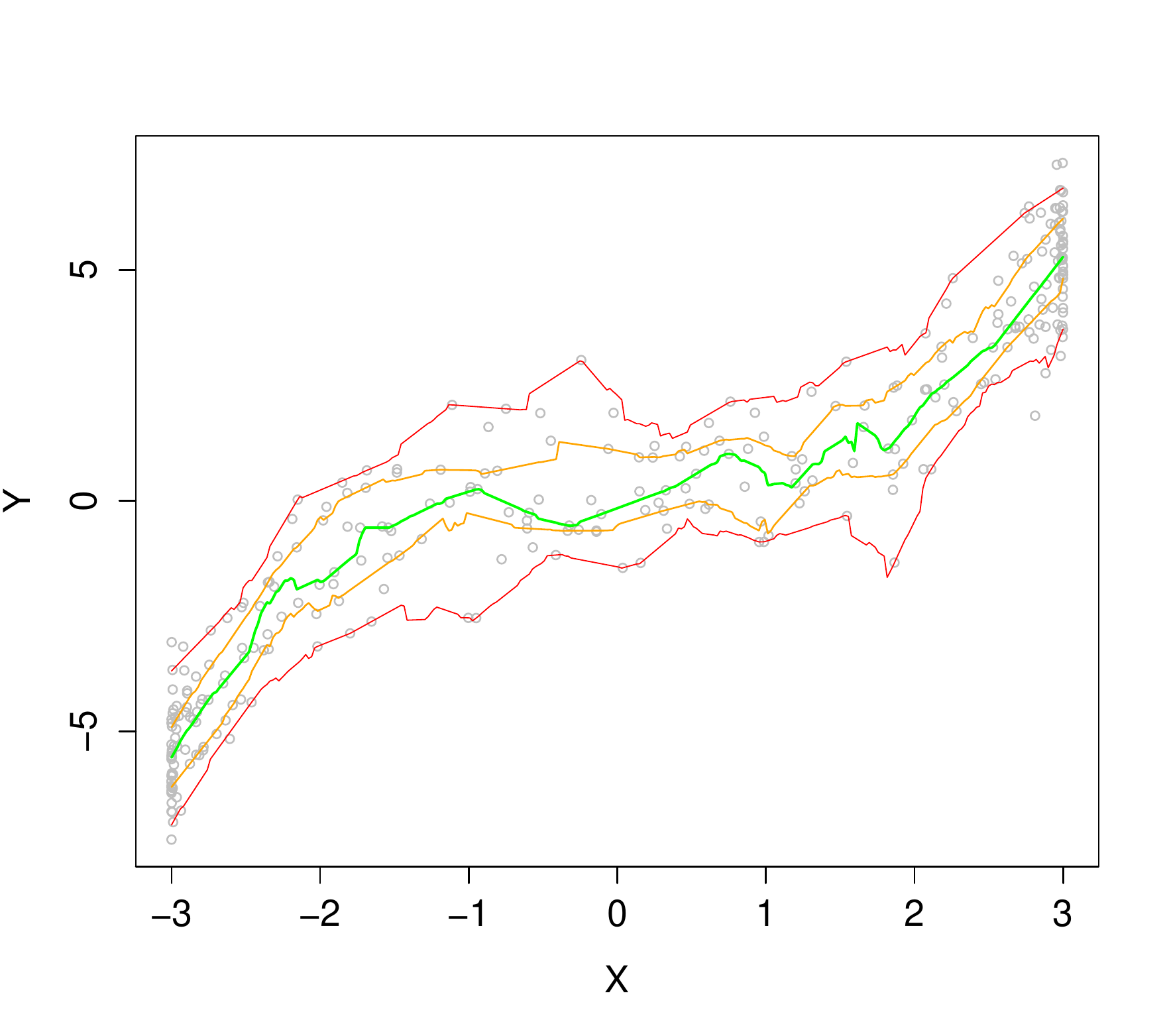}
\end{array}
$
\end{center}
\caption{\footnotesize{The proposed estimated conditional quantiles curves $x\mapsto \barq$ for $B=50$ (top left), their nearest-neighbor competitors
 from \citet{BhattachGango} (top right), and their local constant 
and local linear 
competitors from \citet{YuJones2}  (bottom left and bottom right, respectively). The sample is the same as in Figure~\ref{n=300,N=10,25,50}.}
\label{n=300,all_est}}
\end{figure}

The main competitors to the proposed quantization-based estimators are the nearest-neighbor estimators (as those of \citealp{BhattachGango}) and the local constant and local linear estimators from \citet{YuJones2}. 
For the sake of comparison, we plot those competitors in Figure~\ref{n=300,all_est}, jointly with our bootstrapped estimator (based on $B=50$), on the same sample used in  Figure~\ref{n=300,N=10,25,50}. For each estimator, the smoothing parameters involved 
%
were selected in an automatic way. For nearest-neighbor estimators, the number~$k$ of neighbors to consider was chosen to minimize the MSE
$$
\mathrm{MSE}(k) = \frac{1}{\mathcal N_x} \sum_{i=1}^{\mathcal N_x} \, \big(\widehat q_\alpha^k(x_i)-q_\alpha(x_i)\big)^2,
$$
where $\widehat q_\alpha^k(x)$ denotes the nearest-neighbor estima of $q_\alpha(x)$ based on $k$ neighbors and $\{x_1,\dots,,x_{\mathcal N_x} \}$ is an equispaced grid on~$(-3,3)$ (note that this method uses the population conditional quantiles, hence is infeasible in practice). Local constant/linear estimators (that are based on the Gaussian kernel) involve a bandwidth that, for fixed $\alpha$, is selected (here in a genuinely data-driven way) as
$$
h_\alpha = \frac{\alpha(1-\alpha)h_{\rm mean}}{\phi(\Phi^{-1}(\alpha))^2},
$$
where $\phi$ and $\Phi$ are respectively the standard normal density and distribution functions. Here, $h_{\rm mean}$ corresponds to the optimal bandwidth for regression mean estimation and is chosen through cross-validation; see  \citet{YuJones2}. As for the quantization-based estimators, the number of quantizers~$N$ is selected through a data-driven method that is defined and investigated in the companion paper~\citet{Chaetal2014b}.

Since $Z$ is generated according to a beta distribution with (equal) parameter values that are smaller than one, the  $X_i$'s are less dense in the middle of the interval (-3,3) than close to its boundaries. The local constant and linear estimators seem to suffer from this fact, as the corresponding quantiles curves are significantly less smooth in the middle than at the boundaries. The nearest-neighbor estimator and our quantization-based estimator provide better estimations  in the middle part, with an advantage for our estimator that  appears to be smoother (despite the fact that, unlike for the nearest-neighbor estimator, the corresponding smoothing parameter value is chosen in a totally data-driven way).

\section{Final comments}
\label{sect_concl} 

In this paper, we presented a new method to estimate nonparametrically conditional quantile curves of~$Y$ given~$X$. The main idea is to use optimal quantization as an alternative to more standard localization techniques such as kernel or nearest-neighbor methods. Our construction consists in replacing~$X$ in the definition of conditional quantiles with a quantized version of~$X$. We showed that this provides a valid approximation of conditional quantiles. In the empirical case, this approximation leads to new estimators of conditional quantiles, that, as we proved, are (for fixed~$N$) consistent to their population counterparts.   

We illustrated the behavior of these estimators on a numerical example, and showed that~$N$ essentially behaves as a smoothing parameter, parallel to the bandwidth and the number of neighbors to be used for kernel and nearest-neighbor methods, respectively. Of course, extensive simulations are needed to compare the performances of the proposed estimators with these classical competitors. This is achieved in the companion paper~\citet{Chaetal2014b}, where it is shown that quantization-based estimators tend to dominate their competitors in terms of MSEs as soon as $X$ is not uniformly distributed. There, a method to choose empirically the smoothing parameter~$N$ is also developed and investigated.   


\appendix

\section{Proofs of Section~\ref{sect_approx}}

In this section, we prove Theorems~\ref{rate_conv_q_X} and~\ref{conv_q_X_x}, which requires to establish several lemmas. We first introduce some notation. Let  $G_a(x)={\rm E}[\rho_\alpha(Y-a)|X=x]$ and denote the corresponding quantized quantity by $\widetilde{G}_a(\tilde{x})={\rm E}[\rho_\alpha(Y-a)|\quant{X}=\tilde{x}]$. Since $\widetilde{q}=\widetilde{q}^{N}_\alpha(x)=\arg\min_{a\in \R} \widetilde{G}_a(\tilde{x})$ and $q=q_\alpha(x)=\arg\min_{a\in \R} G_a(x)$, it is natural to try to control the distance between $\widetilde{G}_a(\tilde{x})$ and $G_a(x)$. This is achieved in Lemma~\ref{lemme_conv_q_x}, the proof of which requires the following preliminary result.

\begin{lemma}
\label{lemlip}
Fix~$\alpha\in(0,1)$ and $a\in\R$. Then, under Assumption~(A), (i) $\rho_\alpha:\mathbb{R}\to\mathbb{R}$ is Lipschitz, with Lipschitz constant $[\rho_\alpha]_{\mathrm{Lip}} = \max(\alpha, 1-\alpha)$, and (ii) $G_a:\mathbb{R}^d\to\mathbb{R}$ is Lipschitz, with Lipschitz constant $[G_a]_{\mathrm{Lip}}=\max(\alpha,1-\alpha)
[m]_{\mathrm{Lip}}$.
\end{lemma}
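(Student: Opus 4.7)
The plan is to prove the two claims separately, starting with the elementary pointwise statement for $\rho_\alpha$ and then lifting it to $G_a$ via the conditional expectation representation.

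For part (i), I would exploit the piecewise linear structure of $\rho_\alpha$. Writing $\rho_\alpha(z) = \alpha z$ for $z \geq 0$ and $\rho_\alpha(z) = -(1-\alpha) z$ for $z < 0$, the function has slopes $\alpha$ on $[0,\infty)$ and $-(1-\alpha)$ on $(-\infty,0)$. For any $z_1, z_2$ lying on the same side of the origin, one immediately gets $|\rho_\alpha(z_1) - \rho_\alpha(z_2)| \leq \max(\alpha, 1-\alpha)\,|z_1 - z_2|$. For $z_1 < 0 \leq z_2$, I would use the triangle inequality through $\rho_\alpha(0)=0$, namely $|\rho_\alpha(z_2) - \rho_\alpha(z_1)| \leq \alpha z_2 + (1-\alpha)|z_1| \leq \max(\alpha,1-\alpha)\,(z_2 + |z_1|) = \max(\alpha,1-\alpha)\,|z_2-z_1|$. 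Sharpness is immediate from restricting to either half-line.

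For part (ii), the key observation is that Assumption~(A)(i) gives independence of $X$ and $\varepsilon$, so I can rewrite
\[
G_a(x) = {\rm E}[\rho_\alpha(m(X,\varepsilon)-a)\mid X=x] = {\rm E}[\rho_\alpha(m(x,\varepsilon)-a)],
\]
where the expectation now integrates only over $\varepsilon$. For $x, x' \in \mathbb{R}^d$, applying part~(i) inside the expectation yields
\[
|G_a(x) - G_a(x')| \leq \max(\alpha,1-\alpha)\, {\rm E}\bigl[|m(x,\varepsilon) - m(x',\varepsilon)|\bigr].
\]
Plugging in $m(x,z)=m_1(x)+m_2(x)z$ from Assumption~(A)(ii), the pointwise bound $|m(x,\varepsilon)-m(x',\varepsilon)| \leq ([m_1]_{\mathrm{Lip}} + [m_2]_{\mathrm{Lip}}|\varepsilon|)\,|x-x'|$ follows from the Lipschitz assumption on $m_1, m_2$. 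Taking expectations and using Jensen's inequality to bound ${\rm E}|\varepsilon| \leq \|\varepsilon\|_p$ (valid since $p \geq 1$, and finite by Assumption~(A)(iii)) gives
\[
{\rm E}[|m(x,\varepsilon)-m(x',\varepsilon)|] \leq \bigl([m_1]_{\mathrm{Lip}} + [m_2]_{\mathrm{Lip}}\|\varepsilon\|_p\bigr)\,|x-x'| = [m]_{\mathrm{Lip}}\,|x-x'|,
\]
which combines with the previous inequality to yield the desired Lipschitz bound on $G_a$.

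No single step is a serious obstacle here; the only subtle point is the use of Jensen to pass from ${\rm E}|\varepsilon|$ to $\|\varepsilon\|_p$, which is what makes the stated Lipschitz constant $[m]_{\mathrm{Lip}}$ (as defined via the $L_p$-norm in~(\ref{lipc})) a valid upper bound rather than producing a slightly smaller constant involving ${\rm E}|\varepsilon|$. Of course the conclusion should be read as ``$G_a$ is Lipschitz with constant at most $\max(\alpha,1-\alpha)[m]_{\mathrm{Lip}}$,'' consistent with the notation used in the sequel.
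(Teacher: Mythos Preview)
Your proposal is correct and follows essentially the same route as the paper: part~(i) is dismissed there as a ``trivial calculus exercise,'' and for part~(ii) the paper likewise uses the independence of $X$ and $\varepsilon$ to pass to ${\rm E}[|\rho_\alpha(m(u,\varepsilon)-a)-\rho_\alpha(m(v,\varepsilon)-a)|]$ and then applies part~(i) together with the $L_p$-Lipschitz bound~(\ref{lipc}). Your version is simply more explicit, in particular about the Jensen step $\|\cdot\|_1\leq\|\cdot\|_p$ that is implicit when the paper invokes~(\ref{lipc}) to control an $L_1$ expectation.
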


\begin{proof}[Proof of Lemma~\ref{lemlip}]
%
Since (i) is a trivial calculus exercise, we only prove~(ii). To do so, note that, for any~$u,v \in \R^d$,
\begin{align*}
|G_a(u) - G_a(v)| 
&= 
\big|
{\rm E}[\rho_\alpha(Y-a)|X=u] - {\rm E}[\rho_\alpha(Y-a)|X=v]
\big|
\\
&=
\big
|{\rm E}[\rho_\alpha(m(X,\varepsilon)-a)|X=u] - {\rm E}[\rho_\alpha(m(X,\varepsilon)-a)|X=v]|
,
\end{align*}
so that the independence of~$X$ and $\varepsilon$ entails
$$
|G_a(u) - G_a(v)| 
 \leq
{\rm E}[|\rho_\alpha(m(u,\varepsilon)-a)-\rho_\alpha(m(v,\varepsilon)-a)|]
\le  
[\rho_\alpha]_{\text{Lip}} [m]_{\mathrm{Lip}}|u-v|
,
$$
where the last inequality follows from Part~(i) of the result and equation~\eqref{lipc}.
\end{proof}

We still need the following lemma to prove Theorem~\ref{conv_q_X_x}.

\begin{lemma}
\label{lemme_conv_q_x}
Fix $\alpha \in (0,1)$ and $x\in \suppX$. For any integer~$N$, let $\tilde{x} =\tilde{x}^N =\mathrm{Proj}_{\gamma^N}(x)$ and $C_{x}=C^N_{x}=\{z\in\suppX : {\rm Proj}_{\gamma^N}(z)=\tilde{x}\}$. 
Then,  under Assumptions~(A)-(B), \\
(i) $\sup_{x \in \suppX} |x-\tilde{x}|
\to
0$ as $N\to\infty$;
\\
(ii) $\sup_{x\in\suppX} R(C_{x})
\to
0$ as $N\to\infty$, where we let $R(C_{x}):= \sup_{z\in C_{x}}|z-\tilde{x}|$;
\\
(iii) $\sup_{x\in\suppX}\sup_{a\in\R}|\widetilde G_a(\tilde{x}) - G_a(x)|  
\to
0$ as $N \to \infty$;
\\
(iv) $\sup_{x\in\suppX}|\min_{a\in \R} \widetilde G_a(\tilde{x}) - \min_{a\in\R}G_a(x)| 
\to
0$ as $N \to \infty$. 
\end{lemma}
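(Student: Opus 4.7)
The plan is to handle the four parts sequentially: (i) and (ii) are purely geometric properties of the optimal grid and the compact support, while (iii) and (iv) follow from those combined with the Lipschitz estimate of Lemma~\ref{lemlip}.

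For (i), I would argue by contradiction. If $\sup_{x\in \suppX} d(x,\gamma^N)$ did not tend to zero, there would exist $\epsilon>0$, a subsequence~$(N_k)$ and points $x_k\in \suppX$ with $d(x_k,\gamma^{N_k})\ge\epsilon$. Compactness of $\suppX$ extracts a further subsequence along which $x_k\to x^\star\in \suppX$, so that for $k$ large $B(x^\star,\epsilon/2)\cap\gamma^{N_k}=\emptyset$, and thus $d(z,\gamma^{N_k})\ge\epsilon/4$ for every $z\in B(x^\star,\epsilon/4)$. Since $x^\star$ lies in the topological support of~$P_X$, one has $P_X(B(x^\star,\epsilon/4))>0$, and the $\mathrm{L}_p$-quantization error is bounded below by the strictly positive constant $(\epsilon/4)^p P_X(B(x^\star,\epsilon/4))$, contradicting Corollary~\ref{rate_conv_qu}. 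Part~(ii) follows immediately: for any $z\in C_x$ the projection of $z$ on $\gamma^N$ is $\tilde{x}$, hence $|z-\tilde{x}|=d(z,\gamma^N)\le \sup_{z'\in\suppX}d(z',\gamma^N)$, whose right-hand side does not depend on~$x$ and tends to $0$ by~(i).

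For (iii), I would exploit that $\widetilde X^{\gamma^N}=\mathrm{Proj}_{\gamma^N}(X)$ is a deterministic function of $X$, together with the identity $\{\widetilde X^{\gamma^N}=\tilde{x}\}=\{X\in C_x\}$, so that by the tower property
\[
\widetilde G_a(\tilde{x})={\rm E}[\rho_\alpha(Y-a)\mid X\in C_x]={\rm E}[G_a(X)\mid X\in C_x].
\]
The Lipschitz bound $[G_a]_{\mathrm{Lip}}=\max(\alpha,1-\alpha)[m]_{\mathrm{Lip}}$ from Lemma~\ref{lemlip}, together with the observation that $x$ itself belongs to $C_x$ so that both $x$ and any $z\in C_x$ lie within $R(C_x)$ of~$\tilde{x}$, then yields
\[
|\widetilde G_a(\tilde{x})-G_a(x)|\le [G_a]_{\mathrm{Lip}}\,{\rm E}[|X-x|\mid X\in C_x]\le 2\max(\alpha,1-\alpha)[m]_{\mathrm{Lip}}\,R(C_x).
\]
The right-hand side is independent of $a$ and tends to zero uniformly in $x$ by~(ii), which proves~(iii).

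Part (iv) is then a routine consequence of~(iii). Writing $\epsilon_N:=\sup_{x\in\suppX}\sup_{a\in\mathbb R}|\widetilde G_a(\tilde{x})-G_a(x)|$, one has
\[
\min_a \widetilde G_a(\tilde{x})\le \widetilde G_{q_\alpha(x)}(\tilde{x})\le G_{q_\alpha(x)}(x)+\epsilon_N=\min_a G_a(x)+\epsilon_N,
\]
and the reverse inequality follows by evaluating $a\mapsto G_a(x)$ at any minimizer of $a\mapsto\widetilde G_a(\tilde{x})$ (which exists by convexity and coercivity of the check-function objective) and invoking (iii) once more. The main obstacle is (i), namely the density-in-support property of $\mathrm{L}_p$-optimal grids on a compact support; once this geometric fact is in hand, parts (ii)-(iv) are essentially formal consequences of Lemma~\ref{lemlip} and the tower property.
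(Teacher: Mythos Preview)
Your proposal is correct and follows essentially the same route as the paper: contradiction via a positive-mass ball for~(i), the observation $R(C_x)\le \sup_{y\in\suppX}|y-\tilde y|$ for~(ii), the tower identity $\widetilde G_a(\tilde x)={\rm E}[G_a(X)\mid X\in C_x]$ together with Lemma~\ref{lemlip}(ii) for~(iii), and the usual evaluate-at-the-other-minimizer trick for~(iv). The only cosmetic difference is in~(i), where you pass to a limit point $x^\star$ via compactness while the paper instead uses the uniform lower bound $\inf_{y\in\suppX}P_X[\mathcal B(y,\varepsilon/2)]>0$ (continuity of $y\mapsto P_X[\mathcal B(y,\varepsilon/2)]$ on the compact $\suppX$); both variants work and lead to the same contradiction with Corollary~\ref{rate_conv_qu}.
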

\vspace{2mm}



\begin{proof}[Proof of Lemma~\ref{lemme_conv_q_x}]
(i) Assume by contradiction that there exists $\varepsilon>0$ such that, for infinitely many~$N$ (for~$N\in \mathcal{N}(\varepsilon)$, say), we have $\sup_{x \in \suppX}|\tilde{x}^N-x| >\varepsilon$. For any such value of~$N$, one can pick $x\in\suppX$ (that may depend on~$N$) with $|\tilde{x}^N-x|>\varepsilon$. No point of the optimal grid~$\gamma^N$
belongs to the ball~$\mathcal B(x,\varepsilon)=\{z\in\R^d:|z-x|<\varepsilon\}$, which implies that, for all $z\in \mathcal B(x,\varepsilon/2)$, $|\tilde z^N-z|>\varepsilon/2$, 
where $\tilde z$ is the projection of $z$ onto~$\gamma^N$. Therefore, for~$N\in\mathcal{N}(\varepsilon)$, 
\begin{eqnarray}
\lefteqn{
\|X^{\gamma^N}-X\|_p^p    
=
 \int_{\text{Supp}(P_X)} |\tilde{z}^N-z|^p\, dP_X(z)
\geq 
}
\nonumber
\\[2mm]
& & 
\int_{\mathcal B(x,\varepsilon/2)} |\tilde{z}^N-z|^p\, dP_X(z)
>
\left(\frac{\varepsilon}{2}\right)^p
\inf_{y\in \suppX} 
P_X\big[\mathcal B(y,\varepsilon/2)\big]
=:\delta_\varepsilon
>0
,
\label{tt}
\end{eqnarray}
where the last inequality follows from the fact that $y\mapsto P_X\big[\mathcal B(y,\varepsilon/2)\big]$ is a continuous function taking only strictly positive values on the compact set~$\suppX$. 
Since the cardinality of~$\mathcal{N}(\varepsilon)$ is infinite, (\ref{tt}) prevents $\|X^{\gamma^N}-X\|_p$ to go to zero as~$N\to\infty$, a contradiction. 
\\
(ii) Since $\suppX$ is compact, we have that, for any~$x$, the radius of the quantization cell~$C_x$ is bounded. Hence, for any~$x$, there exists $x^{*N}\in C_x$ such that 
$$
R(C_{x})= \sup_{z\in C_{x}}|z-\tilde{x}|=|x^{*N}-\tilde{x}|\leq \sup_{x \in \suppX} |x-\tilde{x}|.
$$ 
Hence,
$$
\sup_{x \in \suppX} R(C_{x})\leq \sup_{x \in \suppX} |x-\tilde{x}|.
$$
The result then follows from Part~(i).
\\
(iii) Fix $a\in\R$. First note that, since $[\quant{X}=\tilde{x}]$ is equivalent to $[X\in C_{x}]$, one has   
$$
|
{\rm E}[\rho_\alpha(Y-a)|\quant{X}=\tilde{x}] 
-
{\rm E}[\rho_\alpha(Y-a)|X=\tilde{x}]
|
\le
\sup_{z \in C_{x}} 
|
{\rm E}[\rho_\alpha(Y-a)|X=z] 
-
{\rm E}[\rho_\alpha(Y-a)|X=\tilde{x}]
|
.
$$
Therefore,
\begin{align*}
\lefteqn{
\hspace{5mm}
|\widetilde G_a(\tilde{x}) - G_a(x)| 
=
|
{\rm E}[\rho_\alpha(Y-a)|\quant{X}=\tilde{x}] 
-
{\rm E}[\rho_\alpha(Y-a)|X=x]
|
}
\\[2mm]
&\le
|
{\rm E}[\rho_\alpha(Y-a)|\quant{X}=\tilde{x}] 
-
{\rm E}[\rho_\alpha(Y-a)|X=\tilde{x}]
|
+
|
{\rm E}[\rho_\alpha(Y-a)|X=\tilde{x}]
-
{\rm E}[\rho_\alpha(Y-a)|X=x]
|
\\[2mm]
&\le
2 \sup_{z \in C_{x}} 
|
{\rm E}[\rho_\alpha(Y-a)|X=z] 
-
{\rm E}[\rho_\alpha(Y-a)|X=\tilde{x}]
|.
\end{align*}
Using the independence between~$X$ and~$\varepsilon$ and the Lipschitz properties of~$\rho_\alpha$ and~$m$ then yields
\begin{align}
\lefteqn{
\hspace{-10mm}
|\widetilde G_a(\tilde{x}) - G_a(x)| 
\leq 
2
\sup_{z \in C_{x}} 
|
{\rm E}[\rho_\alpha(m(z, \varepsilon)-a) 
-
{\rm E}[\rho_\alpha(m(\tilde{x},\varepsilon)-a)]
|
}
\nonumber
\\[2mm]
&\le
2\max(\alpha,1-\alpha)[m]_{\mathrm{Lip}}
\sup_{z \in C_{x}} 
|
z - \tilde{x}
|
=
2\max(\alpha,1-\alpha)[m]_{\mathrm{Lip}}
R(C_{x})
.
\label{part_cell}
\end{align}
Hence,
$$
\sup_{x \in \suppX} 
\sup_{a\in\R} 
|\widetilde G_a(\tilde{x}) - G_a(x)| 
\le 
2\max(\alpha,1-\alpha)[m]_{\mathrm{Lip}}
\sup_{x \in \suppX} R(C_{x}).
$$
The results then follows from Part~(ii).

(iv) Letting $\mathbb I_+ = \mathbb I_{[\min_{a\in \R}\widetilde G_a(\tilde{x}) \ge \min_{a\in\R}G_a(x)]}$, we have
\begin{align*}
\lefteqn{
\hspace{-10mm}
| \min_{a\in \R}\widetilde G_a(\tilde{x}) - \min_{a\in\R}G_a(x)| \mathbb I_+
= \big(\widetilde G_{\tildeq}(\tilde{x}) - G_{q_\alpha(x)}(x)\big)\mathbb I_+
}
\\[3mm]
\hspace{10mm}
&
\le 
\big(\widetilde G_{q_\alpha(x)}(\tilde{x}) - G_{q_\alpha(x)}(x)\big))\mathbb I_+
\le \sup_{a\in\R} |\widetilde G_a(\tilde{x}) - G_a(x)|\mathbb I_+
.
\end{align*}
Proceeding similarly with $\mathbb I_- = \mathbb I_{[\min_{a\in \R}\widetilde G_a(\tilde{x}) < \min_{a\in\R}G_a(x)]}$, this yields
\begin{align*}
\lefteqn{
\hspace{-10mm}
| \min_{a\in \R}\widetilde G_a(\tilde{x}) - \min_{a\in\R}G_a(x)| \mathbb I_-
= \big(G_{q_\alpha(x)}(x) - \widetilde G_{\tildeq}(\tilde{x})\big)\mathbb I_-
}
\\[3mm]
\hspace{10mm}
&
\le \big(G_{\tildeq}(x)- \widetilde G_{\tildeq}(\tilde{x})\big)\mathbb I_-
\le \sup_{a\in\R} |\widetilde G_a(\tilde{x}) - G_a(x)|\mathbb I_-
,
\end{align*}
 so that $| \min_{a\in \R}\widetilde G_a(\tilde{x}) - \min_{a\in\R}G_a(x)| \le \sup_{a\in\R} |\widetilde G_a(\tilde{x}) - G_a(x)|$.
The result therefore follows from Part~(iii).
\end{proof}
\vspace{2mm}


We can now prove Theorem~\ref{conv_q_X_x}.

\begin{proof}[Proof of Theorem~\ref{conv_q_X_x}]
First note that, for any~$x\in\suppX$, 
\begin{align*}
|G_{\tildeq}(x) - G_{q_\alpha(x)}(x)| 
&\le
 |G_{\tildeq}(x) - \widetilde G_{\tildeq}(\tilde{x})|
 + 
| \widetilde G_{\tildeq}(\tilde{x}) - G_{q_\alpha(x)}(x)|
\\[2mm]
&\le
\sup_{a\in\R} |G_a(x) - \widetilde G_a(\tilde{x})|
 + 
| \min_{a\in \R}\widetilde G_a(\tilde{x}) - \min_{a\in\R}G_a(x)|
\\[2mm]
&\le
\sup_{x\in\suppX}\sup_{a\in\R} |G_a(x) - \widetilde G_a(\tilde{x})|
 + 
\sup_{x\in\suppX}| \min_{a\in \R}\widetilde G_a(\tilde{x}) - \min_{a\in\R}G_a(x)|.
\end{align*}
Therefore, Lemma~\ref{lemme_conv_q_x}(iii)-(iv) readily implies that, as $N\to\infty$,  
\begin{equation}
\label{you}
\sup_{x\in\suppX}|G_{\tildeq}(x) - G_{q_\alpha(x)}(x)| \to 0.
\end{equation}
%

Now, let~$\tilde{N}$ be such that, for any~$N\geq \tilde{N}$, we have $|G_{\tildeq}(x) - G_{q_\alpha(x)}(x)|\leq 1$ for all~$x\in\suppX$. As we will show later in this proof, this implies that there exists~$M$ such that
\begin{equation}
\label{toshowcomp}
|\tildeq-q_\alpha(x)|\leq M, 
\end{equation}
for all $x\in\suppX$ and~$N\geq \tilde{N}$. 

One can easily check that, for any~$x\in\suppX$, $a\mapsto G_a(x)$ is twice continuously differentiable, with derivatives  
$$
\frac{dG_a(x)}{da} = F^{\varepsilon}\bigg(\frac{a-m_1(x)}{m_2(x)}\bigg)-\alpha
\ \textrm{ and } \ \
\frac{d^2G_a(x)}{da^2} =
\frac{1}{m_2(x)}
\, f^{\varepsilon}\bigg(\frac{a-m_1(x)}{m_2(x)}\bigg).
$$
Consequently, performing a second-order expansion about $a=q_\alpha(x)$ provides
$$
G_{\tildeq}(x) - G_{q_\alpha(x)}(x) = 
\frac{1}{m_2(x)}\,
f^{\varepsilon}\bigg(\frac{q^N_{\alpha *}(x)-m_1(x)}{m_2(x)}\bigg) 
(\tildeq-q_\alpha(x))^2,
$$
for some $q^N_{\alpha *}(x)$ between  $\tildeq$ and $q_\alpha(x)$. Therefore, 
\begin{equation}
\label{sidmq}
\sup_{x\in\suppX}(\tildeq-q_\alpha(x))^2
\leq
\frac{\sup_{x\in\suppX} m_2(x)}{\inf_{x\in\suppX}f^{\varepsilon}\Big(\frac{q^N_{\alpha *}(x)-m_1(x)}{m_2(x)}\Big)}\sup_{x\in\suppX}|G_{\tildeq}(x) - G_{q_\alpha(x)}(x)|.
\end{equation}
Since $m_2(\cdot)$ is a continuous function defined over the compact set $\suppX$, we have 
\begin{equation}
\label{dgsj}
\sup_{x\in\suppX} m_2(x) \leq C_a
\end{equation}
 for some constant $C_a$. 
Using~(\ref{toshowcomp}) and the fact that $q_\alpha(\cdot)$, $m_1(\cdot)$, and $m_2(\cdot)$  are continuous functions also defined over this compact set (with $m_2(\cdot)$ taking strictly positive values), we have that, for $N\geq \tilde{N}$,  
$$
\sup_{x\in\suppX}
\frac{|q^N_{\alpha *}(x)-m_1(x)| }{m_2(x)}
\leq \frac{\sup_{x\in\suppX} \big ( |q_\alpha(x)|+|m_1(x)| + M \big)}{\inf_{x\in\suppX} m_2(x)} \leq C_b,
$$
for some constant~$C_b$. Jointly with the continuity of the positive function~$f^\varepsilon(\cdot)$, this implies that the infimum in~(\ref{sidmq}) admits a strictly positive lower bound that is independent of~$N$ (for $N\geq \tilde{N}$).  Using this, (\ref{you}) and~(\ref{dgsj}), we conclude from~(\ref{sidmq}) that
$$
\sup_{x\in\suppX}(\tildeq-q_\alpha(x))^2
\to 0,
$$
as~$N\to \infty$, which was to be proved. 

It remains to show that the claim in~(\ref{toshowcomp}) indeed  holds true. By contradiction, assume that for all~$M$, there exists~$x=x_M$ (and $N\geq \tilde{N}$) such that $|\tildeq-q_\alpha(x)|> M$. The convexity of~$a\mapsto G_a(x)$ and the fact that $|G_{\tildeq}(x) - G_{q_\alpha(x)}(x)|\leq 1$ for all~$x\in\suppX$ (for $N\geq \tilde{N}$) readily implies that, for any~$a$ with $|a-q_\alpha(x)|\leq M$,
$$
G_a(x)\leq G_{q_\alpha(x)}(x) + \frac{|a-q_\alpha(x)|}{M}.  
$$
Since this holds for all arbitrarily large~$M$, the convexity of~$G_a(x)$ implies that  
$$
f^{\varepsilon}\left(\frac{q_\alpha(x)+1-m_1(x)}{m_2(x)}\right)
=
\frac{d^2G_a(x)}{da^2}\Bigg|_{a=q_\alpha(x)+1}\to 0
$$
as~$M\to\infty$. The same argument as above, however, shows that $\inf_{x\in\suppX}f^{\varepsilon}\left(\frac{q_\alpha(x)+1-m_1(x)}{m_2(x)}\right)>0$, a contradiction. 
\end{proof}
\vspace{2mm}


The proof of Theorem~\ref{rate_conv_q_X} requires the following three lemmas.

\begin{lemma}
\label{rate_conv_E_X}
Fix~$\alpha\in(0,1)$. Then (i) under Assumptions~(A)-(B), we have 
\begin{equation}\label{bound_prop}
\big\|
\textstyle{\sup_{a}} \big| \widetilde{G}_a(\quant{X}) - G_a(X)\big|
\big\|_p
\le  
2
\max(\alpha,1-\alpha)
[m]_{\mathrm{Lip}}
\big\|
\quant{X} - X
\big\|_p
\,
;
\end{equation}
(ii) under Assumptions~(A$^\prime$)-(B), we have that
$$
\big\|
\textstyle{\sup_{a}} \big| \widetilde{G}_a(\quant{X}) - G_a(X)\big|
\big\|_p
= 
O\big(N^{-1/d}\,\big)
,
$$
as~$N\to \infty$.
\end{lemma}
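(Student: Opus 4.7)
The strategy is to control $|\widetilde G_a(\quant X) - G_a(X)|$ pointwise via Lipschitz arguments and then take $\|\cdot\|_p$-norms, leveraging the fact that by Lemma~\ref{lemlip}(ii) the Lipschitz constant $[G_a]_{\mathrm{Lip}} = \max(\alpha,1-\alpha)[m]_{\mathrm{Lip}}$ does not depend on $a$, so the supremum over $a$ is cheap. The key structural observation is that, by the tower property and the fact that $\quant X$ is $\sigma(X)$-measurable,
$$
\widetilde G_a(\quant X) = \mathrm{E}\big[\rho_\alpha(Y-a)\,\big|\,\quant X\big] = \mathrm{E}\big[G_a(X)\,\big|\,\quant X\big].
$$

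The plan is then, first, to split
$$
\widetilde G_a(\quant X) - G_a(X) = \big(\widetilde G_a(\quant X) - G_a(\quant X)\big) + \big(G_a(\quant X) - G_a(X)\big).
$$
For the second piece, Lemma~\ref{lemlip}(ii) yields directly $|G_a(\quant X) - G_a(X)| \le \max(\alpha,1-\alpha)[m]_{\mathrm{Lip}}\,|\quant X - X|$, with a bound uniform in $a$. For the first piece, I would use the identity above to write
$$
\widetilde G_a(\quant X) - G_a(\quant X) = \mathrm{E}\big[G_a(X) - G_a(\quant X)\,\big|\,\quant X\big],
$$
and again apply Lemma~\ref{lemlip}(ii) inside the conditional expectation, which gives
$$
\big|\widetilde G_a(\quant X) - G_a(\quant X)\big| \le \max(\alpha,1-\alpha)[m]_{\mathrm{Lip}}\,\mathrm{E}\big[|X - \quant X|\,\big|\,\quant X\big],
$$
again with a bound uniform in $a$.

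Combining the two bounds and taking $\sup_a$ before the $L_p$-norm (legal because the right-hand sides do not depend on $a$) yields
$$
\big\|\textstyle\sup_a |\widetilde G_a(\quant X) - G_a(X)|\big\|_p \le \max(\alpha,1-\alpha)[m]_{\mathrm{Lip}}\Big(\big\|\mathrm{E}[|X-\quant X|\,\big|\,\quant X]\big\|_p + \big\|X - \quant X\big\|_p\Big).
$$
Conditional Jensen's inequality applied to $z\mapsto |z|^p$ gives $\|\mathrm{E}[|X-\quant X|\,|\,\quant X]\|_p \le \|X-\quant X\|_p$, and assembling the two identical terms gives the factor $2$ in~(\ref{bound_prop}), proving (i). Part (ii) is then immediate: under Assumption~(A$^\prime$) the moment condition $\|X\|_{p+\delta}<\infty$ allows Corollary~\ref{rate_conv_qu} to be invoked, giving $\|\quant X - X\|_p = O(N^{-1/d})$, which plugged into (i) produces the stated rate.

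I do not expect a real obstacle here: the Lipschitz constant of $G_a$ being independent of $a$ is what makes the argument essentially painless, so the only small care needed is the justification that $\widetilde G_a(\quant X) = \mathrm{E}[G_a(X)|\quant X]$ (which follows from $\sigma(\quant X) \subset \sigma(X)$ and the definition of conditional expectation) and the application of conditional Jensen to pass from a conditional-expectation norm to an unconditional one.
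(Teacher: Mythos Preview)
Your proposal is correct and follows essentially the same route as the paper: the same splitting $\widetilde G_a(\quant X)-G_a(X)=(\widetilde G_a(\quant X)-G_a(\quant X))+(G_a(\quant X)-G_a(X))$, the same tower-property identity $\widetilde G_a(\quant X)=\mathrm{E}[G_a(X)\mid\quant X]$, the same use of Lemma~\ref{lemlip}(ii) (whose $a$-independence makes the $\sup_a$ harmless), and conditional Jensen to reduce $\|\mathrm{E}[\,\cdot\mid\quant X]\|_p$ to $\|\cdot\|_p$. The only cosmetic difference is that the paper applies the Lipschitz bound after Jensen rather than before, which is immaterial.
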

\vspace{2mm}

\begin{lemma}
\label{rate_conv_min_X}
Fix~$\alpha\in(0,1)$. Then (i) under Assumptions~(A)-(B), we have
$
\big\| \widetilde G_{\tilde{q}}(\quant{X}) - G_{q}(X) \big\|_p
\leq 2
\max(\alpha,1-\alpha)[m]_{\mathrm{Lip}}
\big\|
\quant{X} - X
\big\|_p
$;
(ii) under Assumptions~(A$^\prime$)-(B), $\big\| \widetilde G_{\tilde{q}}(\quant{X}) - G_{q}(X) \big\|_p
=O(N^{-1/d})$ as~$N\to\infty$. 
\end{lemma}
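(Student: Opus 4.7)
The plan is to reduce the bound for the difference of the \emph{minimized} functionals $\widetilde G_{\tilde q}(\widetilde X^N) - G_q(X)$ to the uniform-in-$a$ bound already established in Lemma~\ref{rate_conv_E_X}. The key observation is the elementary ``minimum-difference'' inequality: for any two real-valued functions $f,g$ on $\mathbb{R}$ with minimizers $a^*=\arg\min_a f(a)$ and $b^*=\arg\min_a g(a)$, one has $|f(a^*)-g(b^*)|\le \sup_{a\in\R}|f(a)-g(a)|$, obtained by chaining $f(a^*)\le f(b^*)\le g(b^*)+\sup_a|f(a)-g(a)|$ and the symmetric inequality. This is exactly the argument already used pointwise in Lemma~\ref{lemme_conv_q_x}(iv).

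Applied conditionally on the value of~$X$, with $f(a)=\widetilde G_a(\widetilde X^N)$ and $g(a)=G_a(X)$, this yields the pointwise (i.e.\ $\omega$-wise) inequality
\[
\big|\widetilde G_{\tilde q}(\widetilde X^N) - G_q(X)\big|
\le
\sup_{a\in\R}\big|\widetilde G_a(\widetilde X^N) - G_a(X)\big|.
\]
Taking $\|\cdot\|_p$ on both sides preserves the inequality, so
\[
\big\|\widetilde G_{\tilde q}(\widetilde X^N) - G_q(X)\big\|_p
\le
\Big\|\sup_{a\in\R}\big|\widetilde G_a(\widetilde X^N) - G_a(X)\big|\Big\|_p.
\]

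From here the two statements follow directly by invoking Lemma~\ref{rate_conv_E_X}: part~(i) of that lemma gives the explicit constant $2\max(\alpha,1-\alpha)[m]_{\mathrm{Lip}}$ times $\|\widetilde X^N - X\|_p$, which is Part~(i) of the present lemma; part~(ii) of that lemma gives the $O(N^{-1/d})$ rate under Assumption~(A$^\prime$)-(B), which is Part~(ii). No further structural argument is required.

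Since the argument is a short chain of reductions, there is essentially no serious obstacle: all the analytic work (Lipschitz control of $\rho_\alpha$, of the link function $m$, the independence of $X$ and $\varepsilon$, and Zador's rate via Corollary~\ref{rate_conv_qu}) has already been absorbed into Lemma~\ref{rate_conv_E_X}. The only subtlety worth double-checking is that $\tilde q$ and $q$ are well-defined measurable functions of $X$, so that the pointwise inequality above can be integrated; this follows from the strict convexity of $a\mapsto G_a(x)$ (and its quantized analogue), as noted in the proof of Theorem~\ref{conv_q_X_x}.
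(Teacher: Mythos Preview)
Your proposal is correct and follows essentially the same approach as the paper: the paper also establishes the pointwise inequality $|\widetilde G_{\tilde q}(\widetilde X^N)-G_q(X)|\le\sup_a|\widetilde G_a(\widetilde X^N)-G_a(X)|$ by splitting on the sign of $\widetilde G_{\tilde q}(\widetilde X^N)-G_q(X)$ via indicators $\mathbb I_+,\mathbb I_-$ (which is exactly your ``minimum-difference'' chaining), and then invokes Lemma~\ref{rate_conv_E_X}.
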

\vspace{2mm}


\begin{lemma}
\label{lemmeK}
Let Assumption (B) hold. For any~$x\in\suppX$, let $L(x) = 1/f^{Y|X=x}(q_\alpha(x))$ and  $L^N(x) = 1/f^{Y|X=x}(c^N(x))$, where $c^N(x)$ is the infimum of all $c$'s between $\tilde q^N_\alpha(x)$ and $q_\alpha(x)$ for which 
\begin{equation}
\int_{\min(q_\alpha(x),\tilde q^N_\alpha(x))}^{\max(q_\alpha(x),\tilde q^N_\alpha(x))} f^{Y|X=x}(y) \,dy= f^{Y|X=x}(c)\, |\tilde q^N_\alpha(x)-q_\alpha(x)|
\label{definc}
\end{equation}
(existence follows from the mean value theorem). Then
$
\|L^N(X)\|_p \to \left\|L(X)\right\|_p
$
as $N\to\infty$.
\end{lemma}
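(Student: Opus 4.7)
My plan is to reduce everything to a uniform-convergence argument on the compact set $\suppX$, then invoke dominated (or bounded) convergence.

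First, I would exploit the definition of $c^N(x)$: since it lies between $q_\alpha(x)$ and $\tilde q_\alpha^N(x)$, one immediately gets $|c^N(x) - q_\alpha(x)| \le |\tilde q_\alpha^N(x) - q_\alpha(x)|$, so Theorem~\ref{conv_q_X_x} gives $\sup_{x \in \suppX}|c^N(x) - q_\alpha(x)| \to 0$ as $N \to \infty$. Next, under Assumption~(A), the conditional density admits the explicit form $f^{Y|X=x}(y) = \frac{1}{m_2(x)} f^\varepsilon\!\left(\frac{y - m_1(x)}{m_2(x)}\right)$, and under Assumption~(B)(i) together with the continuity of $m_1, m_2$ and $f^\varepsilon$, the map $(x,y) \mapsto f^{Y|X=x}(y)$ is continuous on $\suppX \times \R$.

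Second, I would observe that $q_\alpha(x) = m_1(x) + m_2(x)(F^\varepsilon)^{-1}(\alpha)$ (valid since $f^\varepsilon$ is continuous and strictly positive, so $F(\cdot|x)$ is strictly increasing), hence $q_\alpha$ is continuous on the compact set $\suppX$ and its image is contained in some compact interval $K \subset \R$. By the uniform convergence from Step~1, $c^N(x)$ eventually lies in a slightly enlarged compact interval $K'$ uniformly in $x \in \suppX$. On the compact set $\suppX \times K'$, the positive continuous function $f^{Y|X=x}(y)$ is uniformly continuous and bounded below by some $\eta > 0$. Uniform continuity combined with the uniform convergence $c^N(x) \to q_\alpha(x)$ then yields
\begin{equation*}
\sup_{x \in \suppX} \big| f^{Y|X=x}(c^N(x)) - f^{Y|X=x}(q_\alpha(x)) \big| \longrightarrow 0,
\end{equation*}
so that $\sup_{x \in \suppX}|L^N(x) - L(x)| \to 0$, with both $L^N(x)$ and $L(x)$ uniformly bounded above by $1/\eta$.

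Finally, since $X$ takes values in $\suppX$ almost surely and $|L^N(X) - L(X)|^p \le (2/\eta)^p$ uniformly in $N$, dominated convergence (or just the fact that uniform convergence plus uniform boundedness implies $L^p$ convergence with respect to any finite measure) delivers $\|L^N(X) - L(X)\|_p \to 0$, whence $\|L^N(X)\|_p \to \|L(X)\|_p$ by the triangle inequality. The main potential obstacle is verifying that $c^N(x)$ stays in a uniformly compact set so that the uniform-continuity argument for $f^{Y|X=x}$ applies; this is handled by combining continuity of $q_\alpha$ on the compact $\suppX$ with the uniform convergence provided by Theorem~\ref{conv_q_X_x}.
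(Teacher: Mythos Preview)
Your proof is correct and follows essentially the same route as the paper's: both invoke Theorem~\ref{conv_q_X_x} to get $c^N(x)\to q_\alpha(x)$ uniformly on~$\suppX$, use the explicit form $f^{Y|X=x}(y)=\frac{1}{m_2(x)}f^\varepsilon\big(\frac{y-m_1(x)}{m_2(x)}\big)$ together with compactness of~$\suppX$ and positivity of~$f^\varepsilon$ to obtain a uniform positive lower bound on $f^{Y|X=x}(c^N(x))$, and then apply dominated convergence. Your argument is marginally stronger in that you actually establish uniform convergence $L^N\to L$ on~$\suppX$ (hence $\|L^N(X)-L(X)\|_p\to 0$), whereas the paper only checks pointwise convergence plus a uniform bound before invoking DCT, but the underlying strategy is the same.
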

\vspace{0mm}

\begin{proof}[Proof of Lemma~\ref{rate_conv_E_X}]
Part~(ii) of the result readily follows from Part~(i) and Corollary~\ref{rate_conv_qu}, so that we may focus on the proof of Part~(i). Note that $G_a(\quant{X})$ stands for the conditional expectation of $\rho_\alpha(Y-a)$ given that $X=\quant{X}$, which is different from ${\rm E}[\rho_\alpha(Y-a)|\quant{X}]$. For any~$a$, 
\begin{eqnarray*}
\big|
\widetilde{G}_a(\quant{X}) - G_a(X)
\big|
&\leq &
\big|
\widetilde{G}_a(\quant{X})-G_a(\quant{X})
\big|
+
\big|
G_a(\quant{X}) -G_a(X)
\big|
\\[2mm]
&\leq &
\textstyle{\sup_{a}}
\big|
\widetilde{G}_a(\quant{X})-G_a(\quant{X})
\big|
+
\textstyle{\sup_{a}}
\big|
G_a(\quant{X}) -G_a(X)
\big|
\end{eqnarray*}
almost surely, so that 
$$
\sup_a
\big|
\widetilde{G}_a(\quant{X}) - G_a(X)
\big|
\leq
\textstyle{\sup_{a}}
\big|
\widetilde{G}_a(\quant{X})-G_a(\quant{X})
\big|
+
\textstyle{\sup_{a}}
\big|
G_a(\quant{X}) -G_a(X)
\big|
$$
almost surely. The triangular inequality then yields
\begin{equation}
\|
\textstyle{\sup_{a}} \big| \widetilde{G}_a(\quant{X}) - G_a(X) \big|
\big\|_p
\le 
\big\|
\textstyle{\sup_{a}} \big|  \widetilde{G}_a(\quant{X})-G_a(\quant{X}) \big|
\big\|_p  
+ 
\big\|
\textstyle{\sup_{a}} \big| G_a(\quant{X}) -G_a(X) \big|
\big\|_p
.
\label{eq_depart}
\end{equation}
Since $\quant{X}$ is $X$-measurable, 
we have that 
$$
\widetilde{G}_a(\quant{X}) 
=
{\rm E}\big[\rho_\alpha(Y-a)|\quant{X}\big] = {\rm E}\big[{\rm E}[\rho_\alpha(Y-a)|X]|\quant{X}\big] ={\rm E}\big[G_a(X)|\quant{X}\big],
$$
which yields
\begin{eqnarray*}
\textstyle{\sup_{a}}
\big| 
\widetilde{G}_a(\quant{X}) - G_a(\quant{X}) 
\big|
&=& 
\textstyle{\sup_{a}}
\big| 
{\rm E} [ G_a(X) - G_a(\quant{X})|\quant{X}]
\big|
\\[2mm]
&\le &
{\rm E} \Big[ \textstyle{\sup_{a}}
 | G_a(X) - G_a(\quant{X}) | 
\,
\Big|
\quant{X}
\Big]
,
\end{eqnarray*}
almost surely. From Jensen's inequality, we then obtain 
$$
\big\|
\textstyle{\sup_{a}} \big|  \widetilde{G}_a(\quant{X})-G_a(\quant{X}) \big|
\big\|_p  
\leq
\big\|
\textstyle{\sup_{a}} \big|  G_a(X)-G_a(\quant{X}) \big|
\big\|_p  
.
$$
Substituting in~(\ref{eq_depart}) and using Lemma~\ref{lemlip}(ii) yields 
\begin{eqnarray*}
\|
\textstyle{\sup_{a}} \big| \widetilde{G}_a(\quant{X}) - G_a(X) \big|
\big\|_p
&\le &
2
\big\|
\textstyle{\sup_{a}} \big|  G_a(X)-G_a(\quant{X}) \big|
\big\|_p  
\\[1mm]
& \le &
2
\max(\alpha,1-\alpha)
 [m]_{\mathrm{Lip}}
\big\|
\quant{X} - X
\big\|_p
,
\end{eqnarray*}
which establishes the result.
\end{proof}


\begin{proof}[Proof of Lemma~\ref{rate_conv_min_X}]
Letting $\mathbb{I}_+=\mathbb{I}_{[\widetilde G_{\tilde{q}}(\quant{X}) \geq  G_{q}(X)]}$, note that
\begin{eqnarray*}
\lefteqn{
\hspace{-10mm}
| \widetilde G_{\tilde{q}}(\quant{X}) - G_{q}(X) |\,
\mathbb{I}_+
\le
\big(
\widetilde{G}_{\tilde{q}}(\quant{X})-G_{q}(X)
\big)
\mathbb{I}_+
}
\\[2mm]
& &
\hspace{10mm}
\le
\big(
\widetilde{G}_{q}(\quant{X})-G_{q}(X)
\big) 
\mathbb{I}_+
 \le 
\Big(
\sup_{a}
\big|
\widetilde{G}_{a}(\quant{X})-G_{a}(X)
\big|
\Big)
\mathbb{I}_+,
\end{eqnarray*}
almost surely. Similarly, with $\mathbb{I}_-=\mathbb{I}_{[\widetilde G_{\tilde{q}}(\quant{X}) <  G_{q}(X)]}$, we have
\begin{eqnarray*}
\lefteqn{
\hspace{-10mm}
| \widetilde G_{\tilde{q}}(\quant{X}) - G_{q}(X) |\,
\mathbb{I}_-
\le
\big(
G_{q}(X)-\widetilde{G}_{\tilde{q}}(\quant{X})
\big) 
\mathbb{I}_-
}
\\[2mm]
& &
\hspace{10mm}
\le
\big(
G_{\tilde{q}}(X)-\widetilde{G}_{\tilde{q}}(\quant{X})
\big) 
\mathbb{I}_-
 \le 
\Big(
\sup_{a}
\big|
\widetilde{G}_{a}(\quant{X})-G_{a}(X)
\big|
\Big)
\mathbb{I}_-
,
\end{eqnarray*}
almost surely, so that 
$
| \widetilde G_{\tilde{q}}(\quant{X}) - G_{q}(X) |
 \le 
\sup_{a}
\big|
\widetilde{G}_{a}(\quant{X})-G_{a}(X)
\big|
,
$
almost surely. The result then directly follows from Lemma~\ref{rate_conv_E_X}. 
\end{proof}

\begin{proof}[Proof of Lemma~\ref{lemmeK}]
We have to prove that, as $N\to\infty$,
\begin{equation}
\label{toprovDCT}
\int_{\suppX }
\frac{1}
{\big(f^{Y|X=x}(c^N(x))\big)^p}
\,dP_X(x)
\to 
\int_{\suppX }
\frac{1}
{\big(f^{Y|X=x}(q_\alpha(x))\big)^p}
\,dP_X(x).
\end{equation}
First note that Assumption~(B) ensures that,  for any~$x\in\suppX$, 
$$
y\mapsto f^{Y|X=x}(y)
=
\frac{1}{m_2(x)}\,
f^{\varepsilon}\bigg(\frac{y - m_1(x)}{m_2(x)}\bigg) 
$$
is continuous. Therefore, Theorem~\ref{conv_q_X_x}, which entails that~$c^N(x) \to q_\alpha(x)$ for any~$x$ as $N\to\infty$, implies that   
$$
\frac{1}
{\big(f^{Y|X=x}(c^N(x))\big)^p}
\to 
\frac{1}
{\big(f^{Y|X=x}(q_\alpha(x))\big)^p},
$$
still for any~$x$ as $N\to\infty$. To establish~(\ref{toprovDCT}), it is then sufficient --- in view of Lebesgue's dominated convergence theorem --- to prove that, for any~$x$ and any (sufficiently large)~$N$,
\begin{equation}
\label{gsll}
\frac{1}
{f^{Y|X=x}(c^N(x))}
=
\frac{m_2(x)}{f^{\varepsilon}\Big(\frac{c^N(x) - m_1(x)}{m_2(x)}\Big)}\,
\leq C
\end{equation}
for some constant~$C$ that does not depend on~$N$. 

To show~(\ref{gsll}), note that Theorem~\ref{conv_q_X_x} and the continuity of $m_1(\cdot)$ and $m_2(\cdot)$  (with $m_2(\cdot)$ taking strictly positive values) over the compact set~$\suppX$ entails that, for $N$ sufficiently large,
\begin{eqnarray*}
\bigg| \frac{c^N(x)-m_1(x) }{m_2(x)}\bigg| 
&=&
\bigg| \frac{(q_\alpha(x)-m_1(x))+(c^N(x)-q_\alpha(x)) }{m_2(x)}\bigg| 
\\[2mm]
&=&
\bigg| \varepsilon_\alpha + \frac{c^N(x)-q_\alpha(x) }{m_2(x)}\bigg|
\\[2mm]
&\leq&
| \varepsilon_\alpha | + \frac{|c^N(x)-q_\alpha(x)| }{D_1}
\leq
| \varepsilon_\alpha | + \frac{1}{D_1}=D_2,
\end{eqnarray*}
for some constants~$D_1,D_2$ (that do not depend on~$N$), where~$\varepsilon_\alpha$ denotes the $\alpha$-quantile of~$\varepsilon$. Consequently,~(\ref{gsll}) directly follows from the continuity of $m_2(\cdot)$ and $f^\varepsilon(\cdot):\R\to\R^+_0$.
\end{proof}
\vspace{2mm}


Finally, we prove Theorem~\ref{rate_conv_q_X}.

\begin{proof}[Proof of Theorem~\ref{rate_conv_q_X}]
Throughout the proof, we write $q(x)$ and $\tilde q(x)$ for   $q_\alpha(x)$ and $\tilde q^N_\alpha(x)$, respectively. Similarly, $q$ and $\tilde q$ will stand for $q_\alpha(X)$ and $\tilde q^N_\alpha(X)$, respectively. 

(i) Let first $r, s \in \R$ with $r\leq s$. For all $y\in\R$, one has $1\ge \mathbb I_{[y\le r]} + \mathbb I_{[y>s]}$, which implies
$$
\rho_\alpha(y-r) 
-
\rho_\alpha(y-s)
\ge
 -(1-\alpha)(s-r)\mathbb I_{[y\le r]}
+
\alpha(s-r)\mathbb I_{[y>s]}
,
$$
hence
\begin{eqnarray*}
\big\{
\rho_\alpha(Y-\tilde q)
- 
\rho_\alpha(Y-q)
\big\}
 \mathbb I_{[\tilde q \leq q]}
\ge
\big\{
-(1-\alpha)(q-\tilde q)\mathbb I_{[Y\le \tilde q]}
+ 
\alpha(q-\tilde q)\mathbb I_{[Y>q]}
\big\}
 \mathbb I_{[\tilde q \leq q]}
.
\end{eqnarray*}
Taking expectation conditional on $X$, this gives
\begin{align}
\lefteqn{
\hspace{-10mm}
|G_{\tilde q}(X)-G_{q}(X)|
 \mathbb I_{[\tilde q \leq q]}
=
\big(G_{\tilde q}(X)-G_{q}(X)\big)
 \mathbb I_{[\tilde q \leq q]}
}
\nonumber
\\[2mm]
&
\hspace{10mm}
\ge
 \big\{
-(1-\alpha)(q-\tilde q)P[Y\le \tilde q|X]
+
\alpha(q-\tilde q)P[Y>q|X]
\big\}
 \mathbb I_{[\tilde q \leq q]}
\nonumber
\\[2mm]
&
\hspace{10mm}
=
(1-\alpha)\big(q-\tilde q\big)\big(\alpha - P[Y\le \tilde q |X]\big)
 \mathbb I_{[\tilde q \leq q]}
\nonumber
\\[2mm]
&
\hspace{10mm}
\geq
\min(\alpha,1-\alpha)|\tilde q-q|\big(P[Y\le q |X] - P[Y\le \tilde q |X]\big)
 \mathbb I_{[\tilde q \leq q]}
\nonumber
\\[2mm]
&
\hspace{10mm}
=
\min(\alpha,1-\alpha)|\tilde q-q| P[\min(\tilde q, q)<Y\le \max(\tilde q, q) |X]
 \mathbb I_{[\tilde q \leq q]}
,
\label{toadd1}
\end{align}
almost surely.

Now, let $r, s \in \R$ with $r>s$. For all $y\in\R$, one has $1\ge \mathbb I_{[y\le s]} + \mathbb I_{[y>r]}$, which implies
$$
\rho_\alpha(y-r) 
-
\rho_\alpha(y-s)
\ge
 -(1-\alpha)(s-r)\mathbb I_{[y\le s]}
+
\alpha(s-r)\mathbb I_{[y>r]}
,
$$
hence
\begin{eqnarray*}
\big\{
\rho_\alpha(Y-\tilde q)
- 
\rho_\alpha(Y-q)
\big\}
 \mathbb I_{[\tilde q > q]}
\ge
\big\{
-(1-\alpha)(q-\tilde q)\mathbb I_{[Y\le q]}
+ 
\alpha(q-\tilde q)\mathbb I_{[Y>\tilde q]}
\big\}
 \mathbb I_{[\tilde q > q]}
.
\end{eqnarray*}
Taking expectation conditional on $X$, this gives
\begin{align}
\lefteqn{
\hspace{-2mm}
|G_{\tilde q}(X)-G_{q}(X)|
 \mathbb I_{[\tilde q > q]}
=
\big(G_{\tilde q}(X)-G_{q}(X)\big)
 \mathbb I_{[\tilde q > q]}
}
\nonumber
\\[2mm]
&
\hspace{10mm}
\ge
 \big\{
-(1-\alpha)(q-\tilde q)P[Y\le  q|X]
+
\alpha(q-\tilde q)P[Y>\tilde q|X]
\big\}
 \mathbb I_{[\tilde q > q]}
\nonumber
\\[2mm]
&
\hspace{10mm}
=
\alpha\big(q-\tilde q\big)\big( P[Y>\tilde q |X]-(1-\alpha)\big)
 \mathbb I_{[\tilde q > q]}
\nonumber
\\[2mm]
&
\hspace{10mm}
\geq 
\min(\alpha,1-\alpha)|\tilde q-q|\big( P[Y\le \tilde q |X]-P[Y\le q |X] \big)
 \mathbb I_{[\tilde q > q]}
\nonumber
\\[2mm]
&
\hspace{10mm}
=
\min(\alpha,1-\alpha)|\tilde q-q| P[\min(\tilde q, q)<Y\le \max(\tilde q, q) |X]
 \mathbb I_{[\tilde q > q]}
,
\label{toadd2}
\end{align}
almost surely. 
Adding up~(\ref{toadd1}) and~(\ref{toadd2}) then provides
\begin{equation}
\label{pres}
\big|
G_{\tilde q}(X) - G_{q}(X)
\big|
\ge
 \min(\alpha,1-\alpha)|\tilde q-q|P[\min(\tilde q, q)< Y\le \max(\tilde q, q)|X].
\end{equation}

Now, for any~$x\in\suppX$, 
\begin{eqnarray*}
\lefteqn{
\hspace{-10mm}
P[\min(\tilde q, q)< Y\le \max(\tilde q, q)|X=x]
=
\int_{\min(q(x),\tilde q(x))}^{\max(q(x),\tilde q(x))} f^{Y|X=x}(y) \,
dy
}
\\[2mm]
& & 
\hspace{20mm}
=
f^{Y|X=x}(c^N(x)) \, |\tilde q(x)- q(x)| 
=
\frac{|\tilde q(x)- q(x)|}{L^N(x)} ,
\end{eqnarray*}
where $c^N(x)$ and $L^N(x)$ were defined in Lemma~\ref{lemmeK}, so that
$$
P[\min(\tilde q, q)\le Y<\max(\tilde q, q)|X]=\frac{|\tilde q-q|}{L^N(X)}
$$
almost surely. Plugging into~(\ref{pres}) yields
$$
|\tilde q -q |^2\le \frac{1}{\min(\alpha, 1-\alpha)} L^N(X) |G_{\tilde q}(X)-G_{q}(X)|,
$$
or equivalently, 
$$
|\tilde q - q|^p 
\le \frac{1}{(\min(\alpha, 1-\alpha))^{p/2}} (L^N(X))^{p/2} |G_{\tilde q}(X)-G_{q}(X)|^{p/2}.
$$
Taking expectations, applying Cauchy-Schwarz inequality in the righthand side, then computing $p$th roots, provides
\begin{equation}
\label{presqqq}
\|\tilde q- q\|_p
\le
 \frac{1}{\sqrt{\min(\alpha, 1-\alpha)}}\left\|L^N(X)\right\|_p^{1/2}\|G_{\tilde q}(X) - G_{q}(X)\|_p^{1/2}.
\end{equation}

From Lemmas~\ref{rate_conv_E_X}-\ref{rate_conv_min_X}, we obtain
\begin{eqnarray*}
\big\|G_{\tilde{q}}(X) - G_{q}(X)\big\|_p
&\le&
 \big\|G_{\tilde{q}}(X) - \widetilde G_{\tilde{q}}(\quant{X})\big\|_p+\big\|\widetilde G_{\tilde{q}}(\quant{X}) - G_{q}(X)\big\|_p
\\[2mm]
&\le&
\big\|\textstyle{\sup_{a}}  |G_{a}(X) - \widetilde G_{a}(\quant{X})|\big\|_p
+\big\|\widetilde G_{\tilde{q}}(\quant{X}) - G_{q}(X)\big\|_p
\\[2mm]
&\le&
4
\max(\alpha,1-\alpha)
[m]_{\mathrm{Lip}}
\big\|
\quant{X} - X
\big\|_p
.
\end{eqnarray*}
The result then follows by plugging this into~(\ref{presqqq}) (the boundedness of $L^N(X)$ in $L_p$ is a direct corollary of Lemma~\ref{lemmeK}). 

(ii) The result directly follows from Part~(i) and Corollary~\ref{rate_conv_qu}.
%
\end{proof}
\vspace{2mm}


\section{Proof of Theorem~\ref{consistth}}

Let $\gamma^N=\gamma^N(X)= \{\tilde x_1,\dots, \tilde x_N\}$ be an optimal grid and $\hat \gamma^{N,n}=\hat \gamma^{N,n}(X_1,\ldots,X_n)=(\hat x_{1}^{N,n},\dots,\hat x^{N,n}_{N})$ be the grid provided by the CLVQ algorithm. Throughout this section, we assume the almost sure convergence of the empirical quantization of~$X$ to the population one, that is 
\begin{equation}
\label{empirconv}
\widehat X^{N,n}=\mathrm{Proj}_{\hat\gamma^{N,n}}(X) \xrightarrow[n\to\infty]{\text{a.s.}} 
\mathrm{Proj}_{\gamma^{N}}(X) =\quant{X},
\end{equation}
which is justified by the discussion in Section~\ref{subsubCLVQ}. 

The proof of Theorem~\ref{consistth} then requires Lemmas~\ref{lemme_conv_p_i}-\ref{lemme_conv_est} below.

\begin{lemma}
\label{lemme_conv_p_i}
Let Assumption (C) hold. Fix $N\in \mathbb N_0$ and $x\in\suppX$, and write $\tilde{x}=\mathrm{Proj}_{\gamma^{N}}(x)$ and $\tilde{x}=\mathrm{Proj}_{\hat\gamma^{N,n}}(x)$. Then, with $\widehat X^{N}_i=\mathrm{Proj}_{\hat\gamma^{N,n}}(X_i)$, $i=1,\dots,n$, we have
\\[2mm]
(i)
$\frac{1}{n}\sum_{i=1}^n\,\mathbb I_{[\widehat X_i^{N}=\hat x^N]}
\xrightarrow[n\to\infty]{\text{a.s.}} 
P[\quant{X}=\tilde{x}]$;
\\[2mm]
(ii)
 after possibly reordering the $\tilde{x}_i$'s, 
$\hat x^{N,n}_i \xrightarrow[n\to\infty]{\text{a.s.}} \tilde{x}_i$, $i=1,\dots,N$ (hence, $\hat \gamma^{N,n} \xrightarrow[n\to\infty]{\text{a.s.}} \gamma^N$).
\end{lemma}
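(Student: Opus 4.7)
My plan is to establish part~(ii) first and then deduce~(i) as a consequence, via a Glivenko--Cantelli argument over the VC-class of Voronoi cells.

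For~(ii), I will use compactness together with an identification of subsequential limits. Since $\suppX$ is compact and, for $p=2$, the CLVQ update of the closest grid point is a convex combination $(1-\delta_t)x_i^{t-1}+\delta_t\xi^t$ of the old point and a new observation, the grids $\hat\gamma^{N,n}$ stay inside a bounded subset of $(\R^d)^N$. Hence any subsequence of $(\hat\gamma^{N,n})_n$ admits a further convergent subsequence, say $\hat\gamma^{N,n_k}\to\gamma^*$; it suffices to prove that every such $\gamma^*$ equals $\gamma^N$ as a multiset. Applying Fubini to the a.s.\ convergence~(\ref{empirconv}) (viewing $X$ as distributed under $P_X$ and independent of the training sample) gives, for a.e.\ realization of $(X_j)_j$, the pointwise convergence $\text{Proj}_{\hat\gamma^{N,n}}(x)\to\text{Proj}_{\gamma^N}(x)$ for $P_X$-a.e.~$x$. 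Along the chosen subsequence, one also has $\text{Proj}_{\hat\gamma^{N,n_k}}(x)\to\text{Proj}_{\gamma^*}(x)$ at every $x$ off the (finitely many) bisecting hyperplanes of $\gamma^*$, hence $P_X$-a.s.\ by Assumption~(C). Matching the two limits yields $\text{Proj}_{\gamma^*}(x)=\text{Proj}_{\gamma^N}(x)$ for $P_X$-a.e.~$x$. Optimality of $\gamma^N$ forces each of its $N$ Voronoi cells to carry strictly positive $P_X$-mass (else a smaller grid would attain the same quantization error), so the range of $\text{Proj}_{\gamma^N}$ is exactly $\gamma^N$, giving $\gamma^N\subseteq\gamma^*$ as sets. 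Together with $|\gamma^*|\le N=|\gamma^N|$, this forces $\gamma^*=\gamma^N$ as multisets. A standard argument then upgrades this to convergence of the whole sequence, and a suitable reordering delivers the componentwise statement of~(ii).

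For~(i), given~(ii), I would invoke the Glivenko--Cantelli theorem for the class $\mathcal F_N$ of Voronoi cells generated by $N$-grids in $\R^d$. Each such cell is an intersection of at most $N-1$ closed half-spaces, so $\mathcal F_N$ has finite VC dimension and
$$
\sup_{A\in\mathcal F_N}\Big|\tfrac{1}{n}\sum_{i=1}^n\mathbb I_{[X_i\in A]}-P_X(A)\Big|\xrightarrow[n\to\infty]{\text{a.s.}}0.
$$
Applying this with the random cell $\hat C^N$ of $\hat x^N$ in $\hat\gamma^{N,n}$ yields $\tfrac{1}{n}\sum_{i=1}^n\mathbb I_{[\widehat X_i^N=\hat x^N]}-P_X(\hat C^N)\to 0$ a.s. Combining~(ii) with Assumption~(C) and the continuity of the half-spaces defining a Voronoi cell as functions of the grid points gives $P_X(\hat C^N\triangle C)\to 0$ a.s., where $C$ is the cell of $\tilde{x}$ in $\gamma^N$; hence $P_X(\hat C^N)\to P_X(C)=P[\quant X=\tilde{x}]$ a.s., completing~(i).

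The main obstacle I anticipate is the identification step in~(ii): turning the pointwise convergence assumption~(\ref{empirconv}) about the random~$X$ into a $P_X$-a.e.\ statement about the deterministic projection map, and excluding the possibility that some subsequential limit $\gamma^*$ has a repeated component. Both rely crucially on Assumption~(C) (which renders Voronoi boundaries $P_X$-null) and on the positivity of every cell mass of the optimal grid~$\gamma^N$. Once this topological identification is in hand, part~(i) reduces to a routine Glivenko--Cantelli argument combined with the continuity of cell masses in the grid points.
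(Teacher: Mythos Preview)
Your argument is correct and considerably more detailed than what the paper actually offers. The paper's own proof is essentially two citations: for Part~(i) it simply refers to \citet{Bally_al} (and \citet{Pages98}), and for Part~(ii) it merely remarks that convergence of the supports~$\hat\gamma^{N,n}$ to~$\gamma^N$ is a ``necessary condition'' for the assumed convergence~(\ref{empirconv}) of the projected random vectors, without spelling out why.

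Your treatment of~(ii) is thus a fleshed-out version of that one-line remark: the compactness/subsequence identification, together with the positivity of every optimal Voronoi cell, makes precise exactly why~(\ref{empirconv}) forces grid convergence. For~(i), your route is genuinely different: instead of importing the result from the quantization literature, you give a self-contained empirical-process argument via the VC property of Voronoi cells, combined with the continuity of cell masses in the grid points (which~(ii) and Assumption~(C) supply). This buys independence from external references and makes transparent where absolute continuity of~$P_X$ enters (nullity of Voronoi boundaries), at the modest cost of invoking VC theory. One small point: your boundedness argument for the grids uses compactness of~$\suppX$, which is Assumption~(B)(i) rather than~(C); this is harmless in context, since the standing assumption~(\ref{empirconv}) is itself justified in Section~\ref{subsubCLVQ} only under compact support, but it would be cleaner to state it explicitly.
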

\vspace{1mm}

\begin{proof}
Under~(\ref{empirconv}), Part~(i) was shown in \citet{Bally_al} (see also \citet{Pages98}) and Part~(ii) only states the \mbox{a.s.} convergence of the supports~$\hat \gamma^{N,n}$ to $\gamma^N$, which is a necessary condition for the corresponding convergence of random vectors in~(\ref{empirconv}).
\end{proof}

\begin{lemma}
\label{lemme_conv_est}
Fix $\alpha \in (0,1)$, $x\in \suppX$ and $N\in \mathbb N_0$, let $K(\subset \mathbb{R})$ be compact, and define
$$
\widehat G_{a}^{N,n}(\hat x^N)
:= 
 \frac{
\frac{1}{n}\sum_{i=1}^n\rho_\alpha(Y_i - a)
\,
\mathbb I_{[\widehat X_i^{N}=\hat x^N]}}{\frac{1}{n}\sum_{i=1}^n\,\mathbb I_{[\widehat X_i^{N}=\hat x^N]}} 
.
$$
Then, under Assumptions (A) and~(C), 
(i)~
$\sup_{a\in K} |\widehat G^{N,n}_{a} (\hat x^N) - \widetilde G_{a}(\tilde{x})|=o_{\rm P}(1)$ as~$n\to\infty$;
(ii)~
$|\min_{a\in \R} \widehat G^{N,n}_{a} (\hat x^N) - \min_{a\in \R}\widetilde G_a(\tilde{x})|=o_{\rm P}(1)$ as~$n\to\infty$;
(iii)~
$|\widetilde G_{\hatq}-\widetilde G_{\tildeq}|=o_{\rm P}(1)$ as~$n\to\infty$.
\end{lemma}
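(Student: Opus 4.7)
The plan is to prove Part~(i) first as the central workhorse and deduce Parts~(ii)--(iii) from it via convex-analysis arguments. Write $\widehat G^{N,n}_a(\hat x^N) = A_n(a)/B_n$ with $A_n(a) := (1/n)\sum_{i=1}^n \rho_\alpha(Y_i - a)\mathbb I_{[\widehat X_i^N = \hat x^N]}$ and $B_n := (1/n)\sum_{i=1}^n \mathbb I_{[\widehat X_i^N = \hat x^N]}$. Lemma~\ref{lemme_conv_p_i}(i) gives $B_n \to P[\quant{X} = \tilde x]$ almost surely, which is strictly positive under Assumption~(C), so the issue reduces to controlling the numerator $A_n(a)$.

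For each fixed $a$, I would compare $A_n(a)$ with its \emph{oracle} analogue $A^*_n(a) := (1/n)\sum_i \rho_\alpha(Y_i - a)\mathbb I_{[\widetilde X_i^N = \tilde x]}$, built from the true optimal grid $\gamma^N$ rather than from the random algorithmic grid. The summands of $A^*_n$ are i.i.d., so the classical SLLN gives
\[
A^*_n(a) \xrightarrow[n\to\infty]{\text{a.s.}} \widetilde G_a(\tilde x)\, P[\quant{X} = \tilde x].
\]
The difference $A_n(a) - A^*_n(a)$ is supported on those $i$ for which $X_i$ lies in the symmetric difference $\Delta_n$ of the Voronoi cell of $\hat x^N$ in $\hat\gamma^{N,n}$ and the cell of $\tilde x$ in $\gamma^N$. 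Lemma~\ref{lemme_conv_p_i}(ii) provides $\hat\gamma^{N,n} \to \gamma^N$ a.s., so $\Delta_n$ shrinks to a subset of the bisector hyperplanes of $\gamma^N$; Assumption~(C) then forces $P_X(\Delta_n) \to 0$ a.s., and the bound $|\rho_\alpha(Y_i - a)| \leq \max(\alpha, 1-\alpha)(|Y_i|+|a|)$ combined with ${\rm E}|Y|<\infty$ (which follows from Assumption~(A) since $Y=m_1(X)+m_2(X)\varepsilon$ with $X$ and $\varepsilon$ in $L_p$) yields $A_n(a) - A^*_n(a) = o_P(1)$ pointwise in $a$. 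Uniformity over the compact $K$ then follows by equicontinuity: Lemma~\ref{lemlip}(i) makes both $a \mapsto \widehat G^{N,n}_a(\hat x^N)$ and $a \mapsto \widetilde G_a(\tilde x)$ Lipschitz in $a$ with constant $\max(\alpha, 1-\alpha)$, so pointwise convergence on a finite $\varepsilon$-net of $K$ upgrades to uniform convergence on all of $K$.

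For Part~(ii), $a \mapsto \widehat G^{N,n}_a(\hat x^N)$ is convex and coercive (linear growth at $\pm\infty$ with slopes $\pm\alpha$ and $\pm(1-\alpha)$), so $\hatq$ exists. Picking $M$ with $\widetilde G_{\pm M}(\tilde x) > \widetilde G_{\tildeq}(\tilde x) + 1$ and applying Part~(i) to the finite set $\{-M, \tildeq, M\}$, the same strict inequality holds for $\widehat G^{N,n}$ with probability tending to $1$, and convexity then traps $\hatq \in [-M, M]$ with high probability; uniform convergence on $[-M, M]$ from Part~(i) therefore delivers (ii). Part~(iii) follows from the sandwich
\[
0 \leq \widetilde G_{\hatq}(\tilde x) - \widetilde G_{\tildeq}(\tilde x) \leq \bigl[\widetilde G_{\hatq}(\tilde x) - \widehat G^{N,n}_{\hatq}(\hat x^N)\bigr] + \bigl[\widehat G^{N,n}_{\tildeq}(\hat x^N) - \widetilde G_{\tildeq}(\tilde x)\bigr],
\]
where the first inequality uses that $\tildeq$ minimizes $\widetilde G_\cdot(\tilde x)$ and the second drops a nonpositive middle term using that $\hatq$ minimizes $\widehat G^{N,n}_\cdot(\hat x^N)$. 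Both bracketed terms are $o_P(1)$ by the uniform convergence from Part~(i) on $[-M,M]$, which contains $\hatq$ with high probability.

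The main obstacle is the oracle-comparison step inside Part~(i): because the indicators $\mathbb I_{[\widehat X_i^N = \hat x^N]}$ depend on the whole sample through the random grid $\hat\gamma^{N,n}$, one cannot apply the SLLN directly to $A_n(a)$. Substituting $\hat\gamma^{N,n}$ by $\gamma^N$ and exploiting the absolute continuity of $P_X$ (Assumption~(C)) is precisely what allows the boundary defect $\Delta_n$ to be absorbed into a negligible term.
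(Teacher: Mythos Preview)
Your argument is correct and rests on the same oracle-comparison decomposition as the paper's proof of Part~(i): replace the data-dependent indicators $\mathbb I_{[\widehat X_i^{N}=\hat x^N]}$ by their optimal-grid counterparts $\mathbb I_{[\widetilde X_i^{N}=\tilde x]}$, handle the resulting i.i.d.\ average by a law of large numbers, and absorb the discrepancy into the shrinking symmetric difference of Voronoi cells. The technical routes differ in three places. For uniformity over~$K$, the paper invokes a uniform law of large numbers (Ferguson's Theorem~16(a)) on the oracle term, whereas you use the common Lipschitz constant $\max(\alpha,1-\alpha)$ in~$a$ from Lemma~\ref{lemlip}(i) to pass from a finite net to all of~$K$; this is more elementary and avoids any empirical-process machinery. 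For Part~(ii), the paper traps $\hatq$ by arguing it is a sample quantile of a growing subsample (implicitly using compactness of $\suppX$, i.e.\ Assumption~(B)(i)); your convexity/coercivity argument is self-contained and stays within the stated Assumptions~(A) and~(C). For Part~(iii), the paper routes through Part~(ii) via a triangle inequality, while your sandwich is direct; both are fine. One point you leave implicit is why $\tfrac{1}{n}\sum_i |Y_i|\,\mathbb I_{[X_i\in\Delta_n]}=o_P(1)$ despite $\Delta_n$ depending on the whole sample: the paper handles this by writing $|Y_i|\le C_1+C_2|\varepsilon_i|$ and exploiting that the index set $\mathcal I_n$ is $X$-measurable, hence independent of the~$\varepsilon_i$'s; a standard truncation-at-level-$M$ argument would serve equally well in your framework.
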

\vspace{0mm}

\begin{proof}
(i) Since
$$
\widetilde G_{a}(\tilde{x})
=
{\rm E}[\rho_\alpha(Y-a)|\quant{X}=\tilde{x}]
=
\frac{{\rm E}[\rho_\alpha(Y-a)\mathbb{I}_{[\quant{X}=\tilde{x}]}]}{P[\quant{X}=\tilde{x}]},
$$
it is sufficient --- in view of Lemma~\ref{lemme_conv_p_i}(i) --- to prove that, as~$n\to\infty$,  
$$
\sup_{a\in K}
\bigg| 
\frac{1}{n}\sum_{i=1}^n\rho_\alpha(Y_i - a)
\,
\mathbb I_{[\widehat X_i^{N}=\hat x^N]}
-
{\rm E}[\rho_\alpha(Y-a)\mathbb{I}_{[\quant{X}=\tilde{x}]}]
\bigg|
=o_{P}(1).
$$
Of course, it is natural to consider the decomposition
$$
\sup_{a\in K}
\bigg| 
\frac{1}{n}\sum_{i=1}^n\rho_\alpha(Y_i - a)
\,
\mathbb I_{[\widehat X_i^{N}=\hat x^N]}
-
{\rm E}[\rho_\alpha(Y-a)\mathbb{I}_{[\quant{X}=\tilde{x}]}]
\bigg|
\leq
\sup_{a\in K} |T_{a1}| + \sup_{a\in K} |T_{a2}|,
$$
with
$$
T_{a1}
=
\frac{1}{n}
\sum_{i=1}^n\rho_\alpha(Y_i - a)
\,
\big(
\mathbb I_{[\widehat X_i^{N}=\hat x^N]}
-
\mathbb I_{[\widetilde X_i^{N}=\tilde x]}
\big)
$$
and
$$
T_{a2}
=
\frac{1}{n}\sum_{i=1}^n\rho_\alpha(Y_i - a)
\,
\mathbb I_{[\widetilde X_i^{N}=\tilde x]}
-
{\rm E}[\rho_\alpha(Y-a)\mathbb{I}_{[\quant{X}=\tilde{x}]}]
.
$$
Using the fact that $m_1(\cdot)$ and $m_2(\cdot)$ are continuous functions defined over the compact set~$\suppX$, we obtain that, for any~$a\in K$, there exist positive constants~$C_1$ and $C_2$ such that
$$
\rho_\alpha(Y-a)
\leq
\max(\alpha,1-\alpha)
|Y-a|
\leq 
\max(\alpha,1-\alpha)
(| m_1(X)| + |m_2(X)|\, |\varepsilon| + |a|)
\leq 
C_1+C_2|\varepsilon|,
$$
that is in $L_1$ (recall that~$\varepsilon$ is assumed to be in $L_p$, $p=2$), the uniform law of large numbers (see, e.g., Theorem~16(a) in \citealp{Fre1996}) shows that $\sup_{a\in K} |T_{a2}|=o_P(1)$ as $n\to\infty$. 

Turning to $T_{a1}$, consider the set $\mathcal{I}_n=\{i=1,\ldots,n : I_{[\widehat X_i^{N}=\hat x^N]} \neq \mathbb I_{[\widetilde X_i^{N}=\tilde x]} \}$ collecting the indices of observations that are projected on the same point as~$x$ for~$\gamma^N$ but not for $\hat\gamma^{N,n}$ (or vice versa on the same point as~$x$ for~$\hat\gamma^{N,n}$ but not for $\gamma^N$). For any~$a\in K$, we have
\begin{eqnarray*}
\lefteqn{
\hspace{-30mm}
|T_{a1}|
\leq 
\frac{1}{n}
\sum_{i\in \mathcal{I}_n}
\big|\rho_\alpha(Y_i - a)\big|
\leq 
\frac{\max(\alpha,1-\alpha)}{n}
\sum_{i\in \mathcal{I}_n}
(| m_1(X_i)| + |m_2(X_i)|\, |\varepsilon_i| + |a|)
}
\\[2mm] 
& & 
\hspace{7mm}
\leq
\frac{\#\mathcal{I}_n}{n} \times
\frac{1}{\#\mathcal{I}_n}
\sum_{i\in \mathcal{I}_n}
(C_1+ C_2\, |\varepsilon_i|)
=:
S_1\times S_2.
\end{eqnarray*}
Clearly, Lemma~\ref{lemme_conv_p_i}(ii) implies that $\# \mathcal{I}_n/n=o_P(1)$ as~$n\to\infty$, while the independence between $\mathcal{I}_n$ (which is measurable with respect to the $X_i$'s) and the $\varepsilon_i$'s entails that ${\rm E}[S_2]=O(1)$ as~$n\to\infty$, so that $S_2$ is bounded in probability. Consequently, $\sup_{a\in K} |T_{a1}|$ goes to zero in probability as~$n\to\infty$. Part~(i) of the result follows.

(ii)
Fix~$\delta>0$ and $\eta>0$. Writing $\hat q=\hatq$ and, as in the previous section, $\tilde q=\tildeq$, first choose $n_1$ and $M$ large enough to have $|\tilde q| \leq M$ and $P[ |\hat q|> M]<\eta/2$ for any~$n\geq n_1$ (Lemma~\ref{lemme_conv_p_i}(i) implies that $\hat q$ is the sample quantile of a number of~$Y_i$'s that increases to infinity, so that $|\hat q|$, with arbitrarily large probability for $n$ large, cannot exceed $2\sup_{x\in\suppX} |q_\alpha(x)|$).  Then, with $\mathbb I_+ = \mathbb I_{[\min_{a\in\R} \widehat G^{N,n}_{a}(\hat x^N) \ge \min_{a\in\R}  \widetilde G_{a}(\tilde{x}) ]}$, we have
\begin{align}
\lefteqn{
\hspace{-10mm}
|\min_{a\in\R} \widehat G^{N,n}_{a}(\hat x^N) - \min_{a\in\R}  \widetilde G_{a}(\tilde{x})| \mathbb I_+
=\big(\widehat G^{N,n}_{\hat q}(\hat x^N) - \widetilde G_{\tilde q}(\tilde{x})\big)\mathbb I_+
}
\nonumber
\\[3mm]
\hspace{10mm}
&
\le 
\big(\widehat G^{N,n}_{\tilde q}(\hat x^N) - \widetilde G_{\tilde q}(\tilde{x})\big)\mathbb I_+
\le \sup_{a\in[-M,M]}|\widehat G^{N,n}_a(\hat x^N) - \widetilde G_a(\tilde{x})|\mathbb I_+
,
\label{tdj1}
\end{align}
almost surely. Now, with $\mathbb I_- = \mathbb I_{[\min_{a\in\R} \widehat G^{N,n}_{a}(\hat x^N) < \min_{a\in\R}  \widetilde G_{a}(\tilde{x}) ]}$, we have that, under $|\hat q|\leq M$, 
\begin{align}
\lefteqn{
\hspace{-10mm}
|\min_{a\in\R} \widehat G^{N,n}_{a}(\hat x^N) - \min_{a\in\R}  \widetilde G_{a}(\tilde{x})| \mathbb I_-
= |\big(\widetilde G_{\tilde q}(\tilde{x}) - \widehat G^{N,n}_{\hat q}(\hat x^N)\big)\mathbb I_-
}
\nonumber
\\[3mm]
\hspace{10mm}
&
\le |\big(\widetilde G_{\hat q}(\tilde{x}) - \widehat G^{N,n}_{\hat q}(\hat x^N)\mathbb I_-
\le \sup_{a\in[-M,M]} |\widehat G^{N,n}_{a}(\hat x^N) - \widetilde G_a(\tilde{x})|\mathbb I_-
.
\label{tdj2}
\end{align}
By combining~(\ref{tdj1}) and~(\ref{tdj2}), we obtain that, under~$|\hat q|\leq M$, 
$$
|\min_{a\in\R} \widehat G^{N,n}_{a}(\hat x^N) - \min_{a\in\R}  \widetilde G_{a}(\tilde{x})|
\leq
\sup_{a\in[-M,M]} |\widehat G^{N,n}_{a}(\hat x^N) - \widetilde G_a(\tilde{x})|.
$$

Consequently, for any~$n\geq n_1$, we obtain
\begin{eqnarray*}
\lefteqn{
P\Big[
|\min_{a\in\R} \widehat G^{N,n}_{a}(\hat x^N) - \min_{a\in\R}  \widetilde G_{a}(\tilde{x})|
>\delta
\Big]
}
\\[2mm]
& & \hspace{2mm}
=
P\Big[
|\min_{a\in\R} \widehat G^{N,n}_{a}(\hat x^N) - \min_{a\in\R}  \widetilde G_{a}(\tilde{x})|
>\delta,|\hat q|\leq M
\Big]
+
P\Big[
|\min_{a\in\R} \widehat G^{N,n}_{a}(\hat x^N) - \min_{a\in\R}  \widetilde G_{a}(\tilde{x})|
>\delta,|\hat q|> M
\Big]
\\[2mm]
& & \hspace{2mm}
\leq
P\Big[
\sup_{a\in[-M,M]} |\widehat G^{N,n}_{a}(\hat x^N) - \widetilde G_a(\tilde{x})|>\delta
\Big]
+
\frac{\eta}{2}.
\end{eqnarray*}
From Part~(i) of the lemma, the first term is smaller than~$\eta/2$ for any~$n\geq n_2$.  We conclude that, for any~$n\geq n_0:=\max(n_1,n_2)$, we have 
$$
P\Big[|\min_{a\in\R} \widehat G^{N,n}_{a}(\hat x^N) - \min_{a\in\R}  \widetilde G_{a}(\tilde{x})|>\delta\Big]<\eta,
$$
which shows Part~(ii) of the result.

(iii) The proof proceeds in the same way as in (ii) above. First we pick $n_1$ and $M$ large enough to have $P[ |\hat q|> M]<\eta/2$ for any~$n\geq n_1$, which yields
\begin{equation}
P\Big[
|\widetilde G_{\tilde q}(\tilde{x}) - \widetilde G_{\hat q}(\tilde{x}) |
>\delta
\Big]
\leq
P\Big[
|\widetilde G_{\tilde q}(\tilde{x}) - \widetilde G_{\hat q}(\tilde{x}) |
>\delta
,
|\hat q|\leq M
\Big]
+
\frac{\eta}{2}
.
\label{allezlesbelges}
\end{equation}
Now, from the triangular inequality, we obtain
\begin{eqnarray*}
\lefteqn{
P\Big[
|\widetilde G_{\tilde q}(\tilde{x}) - \widetilde G_{\hat q}(\tilde{x}) |
>\delta,|\hat q|\leq M
\Big]
}
\\[2mm]
& & \hspace{5mm}
\leq 
P\Big[
|\widetilde G_{\tilde q}(\tilde{x}) - \widehat G^{N,n}_{\hat q}(\hat x^N)|
>\delta/2,|\hat q|\leq M
\Big]
+
P\Big[
|\widehat G^{N,n}_{\hat q}(\hat x^N)- \widetilde G_{\hat q}(\tilde{x})|
>\delta/2,|\hat q|\leq M
\Big]
\\[2mm]
& & \hspace{5mm}
\leq 
P\Big[
|\min_{a\in\R} \widehat G^{N,n}_{a}(\hat x^N) - \min_{a\in\R}  \widetilde G_{a}(\tilde{x})|
>\delta/2
\Big]
+
P\Big[
\sup_{a\in[-M,M]}  |\widehat G^{N,n}_{a}(\hat x^N) - \widetilde G_a(\tilde{x})|
>\delta/2
\Big],
\end{eqnarray*}
which, from Part~(i) and Part~(ii) of the lemma, can be made arbitrarily small for~$n$ large enough. Jointly with~(\ref{allezlesbelges}), this establishes the result. 
\end{proof}

We can now conclude with the proof of Theorem~\ref{consistth}. 

\begin{proof}[Proof of Theorem~\ref{consistth}]
Since the function $\rho_\alpha(\cdot)$ is strictly convex, $\widetilde G_a(\tilde{x})$ is also strictly convex in~$a$. Its minimum in~$a$ (for any fixed~$x$) is therefore unique, and the convergence in probability of~$\widetilde G_{\hat q}(\tilde{x})$ towards $\widetilde G_{\tilde q}(\tilde{x})$ implies the convergence in probability of the corresponding arguments.

\end{proof}


\bibliography{biblio_cond_qu2}

\end{document}